\documentclass{article} 
\usepackage{iclr_2025/iclr2025_conference,times}
\usepackage{mathtools}

\usepackage{amsmath,amsfonts,bm}
\usepackage{mathtools}









\def\eqref#1{equation~\ref{#1}}









\def\1{\bm{1}}








\def\va{{\bm{a}}}
\def\vb{{\bm{b}}}
\def\vc{{\bm{c}}}

\def\ve{{\bm{e}}}

\def\vg{{\bm{g}}}
\def\vh{{\bm{h}}}
\def\vi{{\bm{i}}}

\def\vo{{\bm{o}}}
\def\vp{{\bm{p}}}

\def\vs{{\bm{s}}}

\def\vu{{\bm{u}}}
\def\vv{{\bm{v}}}
\def\vw{{\bm{w}}}
\def\vx{{\bm{x}}}
\def\vy{{\bm{y}}}
\def\vz{{\bm{z}}}


\def\mA{{\bm{A}}}
\def\mB{{\bm{B}}}
\def\mC{{\bm{C}}}
\def\mD{{\bm{D}}}
\def\mE{{\bm{E}}}
\def\mF{{\bm{F}}}

\def\mH{{\bm{H}}}

\def\mO{{\bm{O}}}

\def\mQ{{\bm{Q}}}
\def\mR{{\bm{R}}}
\def\mS{{\bm{S}}}

\def\mU{{\bm{U}}}
\def\mV{{\bm{V}}}
\def\mW{{\bm{W}}}
\def\mX{{\bm{X}}}
\def\mY{{\bm{Y}}}
\def\mZ{{\bm{Z}}}

\DeclareMathAlphabet{\mathsfit}{\encodingdefault}{\sfdefault}{m}{sl}
\SetMathAlphabet{\mathsfit}{bold}{\encodingdefault}{\sfdefault}{bx}{n}











\newcommand{\R}{\mathbb{R}}

\DeclarePairedDelimiterX{\infdivx}[2]{(}{)}{%
  #1\;\delimsize\|\;#2%
}

\newcommand{\KL}{D_\text{KL} \infdivx}



\DeclareMathOperator*{\argmax}{arg\,max}

\DeclareMathOperator{\Tr}{Tr}









\usepackage{hyperref}
\usepackage{url}
\usepackage{graphicx}
\usepackage{comment}
\usepackage{natbib}
\usepackage[utf8]{inputenc} 
\usepackage[T1]{fontenc}    
\usepackage{hyperref}       
\usepackage{url}            
\usepackage{booktabs}       
\usepackage{amsfonts}       
\usepackage{nicefrac}       
\usepackage{microtype}      

\usepackage{wrapfig}
\usepackage{calrsfs}
\DeclareMathAlphabet{\pazocal}{OMS}{zplm}{m}{n}

\usepackage{cleveref}[capitalise]
\crefname{appendix}{Appendix}{Appendices}
\crefname{equation}{Equation}{Equations}
\crefname{figure}{Figure}{Figures}
\crefname{tabular}{Table}{Tables}
\crefname{theorem}{Theorem}{Theorems}
\crefname{section}{Section}{Sections}
\crefformat{equation}{(#2#1#3)}

\usepackage{amsthm}
\theoremstyle{plain}
\newtheorem{theorem}{Theorem}[section]

\newtheorem{corollary}[theorem]{Corollary}
\theoremstyle{definition}

\theoremstyle{remark}

\usepackage{nicefrac}
\usepackage[font=footnotesize]{caption}
\usepackage{float}

\newcommand{\vspacebeforesection}{\vspace{-5pt}}
\newcommand{\vspaceaftersection}{\vspace{-5pt}}

\title{Range, not Independence, Drives Modularity in Biologically Inspired Representations}


\author{Will Dorrell$^{*,1}$ \And Kyle Hsu$^{*,2}$ \And Luke Hollingsworth$^{\dagger,1}$ \And Jin Hwa Lee$^{\dagger,1}$ \And Jiajun Wu$^2$ \And Chelsea Finn$^2$ \And Peter E. Latham$^1$ \And Tim Behrens$^{1,3}$ \And James C.R. Whittington$^{2,3}$\\ \And {\small \normalfont $^1$University College London \quad 
    $^2$Stanford University \quad $^3$ Oxford University \quad $^*$ Co-First \quad $^\dagger$ Co-Second}
}

%

\usepackage{fancyhdr}
\pagestyle{fancy}
\fancyhead{} 
\lhead{Published as a conference paper at ICLR 2025}

\iclrfinalcopy 
\begin{document}

\maketitle

\begin{abstract}
Why do biological and artificial neurons sometimes modularise, each encoding a single meaningful variable, and sometimes entangle their representation of many variables? In this work, we develop a theory of when biologically inspired networks---those that are nonnegative and energy efficient---modularise their representation of source variables (sources). We derive necessary and sufficient conditions on a sample of sources that determine whether the neurons in an optimal biologically-inspired linear autoencoder modularise. Our theory applies to any dataset, extending far beyond the case of statistical independence studied in previous work. Rather we show that sources modularise if their support is ``sufficiently spread''. From this theory, we extract and validate predictions in a variety of empirical studies on how data distribution affects modularisation in nonlinear feedforward and recurrent neural networks trained on supervised and unsupervised tasks. Furthermore, we apply these ideas to neuroscience data, showing that range independence can be used to understand the mixing or modularising of spatial and reward information in entorhinal recordings in seemingly conflicting experiments. Further, we use these results to suggest alternate origins of mixed-selectivity, beyond the predominant theory of flexible nonlinear classification. In sum, our theory prescribes precise conditions on when neural activities modularise, providing tools for inducing and elucidating modular representations in brains and machines.
\end{abstract}

\vspacebeforesection
\section{Introduction}
\vspaceaftersection

Our brains are modular. At the macroscale, different regions, such as visual or language cortex, perform specialised roles; at the microscale, single neurons often precisely encode single variables such as self-position \citep{hafting2005microstructure} or the orientation of a visual edge \citep{hubel1962receptive}. This mysterious alignment of meaningful concepts with single neuron activity has for decades fuelled hope for understanding a neuron's function by finding its associated concept.
Yet, as neural recording technology has improved, it has become clear that many neurons behave in ways that elude such simple categorisation: they appear to be mixed selective, responding to a mixture of variables in linear and nonlinear ways~\citep{rigotti2013importance,tye2024mixed}. 
The modules vs.~mixtures debate has recently been reprised in the machine learning community. Both the disentangled representation learning and mechanistic interpretability subfields are interested in when neural network representations decompose into meaningful components. Findings have been similarly varied, with some studies showing meaningful single unit response properties and others clearly showing mixed tuning (for a full discussion on related work see \cref{app:related_work}).
This brings us to our main research question: Why are neurons, biological and artificial, sometimes modular and sometimes mixed selective?

In this work, we develop a theory that precisely determines, for any dataset, whether the optimal learned representations will be modular or not. More precisely, in the linear autoencoder setting, we show that modularity in biologically constrained representations is governed by a \emph{sufficient spread} condition that can roughly be thought of as measuring the extent to which the \emph{range} of the source variables (sources, aka factors of variation) is \emph{rectangular}. This sufficient spread condition bears resemblance to identifiability conditions derived in the matrix factorisation literature~\citep{tatli2021generalized, tatli2021polytopic}, though both the precise form of the problem we study and the condition we derive differ~(\cref{app:related_work}). This condition on the sources is much weaker than the case of mutual independence studied in previous work~\citep{whittington2022disentanglement}, and commensurately broadens the settings we can understand using our theory. For example, the loosening from statistical independence to rectangular support enables us to predict when linear recurrent neural network (RNN) representations of dynamic variables modularise.

Further, these results empirically generalise to nonlinear settings: we show that our source support conditions predict modularisation in nonlinear feedforward networks on supervised and autoencoding tasks as well as in nonlinear RNNs. 
We also fruitfully apply our theory to neuroscience data. We provide an explanation for why grid cells only sometimes warp in the presence of rewards, on the basis of the support independence properties of space and reward. Further, we use these results to suggest other settings in which mixed-selectivity might appear beyond the traditional nonlinear classification theory. In summary, our work contributes to the growing understanding of neural modularisation by highlighting how source support determines modularisation and explaining puzzling observations from both the brain and neural networks in a cohesive normative framework.

\vspacebeforesection
\section{Modularisation in Biologically Inspired Linear Autoencoders}
\label{sec:linear_autoencoders}
\vspaceaftersection

We begin with our main technical result: necessary and sufficient conditions for the modularisation of biologically inspired linear autoencoders.

\subsection{Preliminaries}
Let $\vs \in \mathbb{R}^{d_s}$ be a vector of $d_s$ scalar source variables (sources, aka factors of variation). 
We are interested in how the empirical distribution, $p(\vs)$, affects whether the sources' neural encoding (latents), $\vz \in \mathbb{R}^{d_z}$, are \emph{modular} with respect to the sources, i.e., whether each neuron (latent) is functionally dependent on at most one source. Following \cite{whittington2022disentanglement}, we build a simplified model in which neural firing rates perfectly linearly autoencode the sources while maintaining nonnegativity (since firing rates are nonnegative):
\begin{equation} \label{eq:biological_constraints}
    \vz = \mW_\text{in} \vs + \vb_\text{in}, \quad \vs = \mW_\text{out} \vz + \vb_\text{out}, \quad \vz \geq 0\text{.}
\end{equation}
Subject to the above constraints, we study the representation that uses the least energy, as in the classic efficient coding hypothesis~\citep{barlow1961possible}. We quantify this using the $\ell^2$ norm of the firing rates and weights (other activity norms considered later):
\begin{equation} \label{eq:biological_optimization}
    \min_{\mW_\text{in}, \vb_\text{in}, \mW_\text{out}, \vb_\text{out}} \; \left\langle  ||\vz||_2^2 \right\rangle_{p(\vs)} + \lambda\left( ||\mW_\text{out}||_F^2 + ||\mW_\text{in}||_F^2 \right) \text{ s.t.} \; \text{\cref{eq:biological_constraints}.}
\end{equation}
We remark that there are common analogues of representational nonnegativity and weight energy efficiency in modern machine learning (ReLU activation functions and weight decay, respectively).
When the sources are statistically independent, i.e. $p(\vs) = \prod_{i=1}^{d_s} p(s_i)$, then the optima of \cref{eq:biological_optimization} have modular latents~\citep{whittington2022disentanglement}. We now improve on this result by showing necessary and sufficient conditions that guarantee modularisation for any dataset, not just those that have statistically independent sources.

\subsection{Intuition for Source Support Modularisation Conditions}
To provide intuition, consider a hypothetical mixed selective neuron,
\begin{equation}\label{eq:mixed_selective_intuition}
    z_j = w_{j1} s_1 + w_{j2} s_2 + b_j,
\end{equation}
It is functionally dependent on both $s_1$ and $s_2$, i.e. it is mixed selective. Perhaps, however, a modular representation, in which this neuron is broken into two separate modular encodings, would be better. We can create such a solution with two neurons each coding a single source:
\begin{equation}\label{eq:modular_intuition}
    z_{j'} = w_{j1} s_1 + b_{j'}, \quad z_{j''} = w_{j2} s_2 + b_{j''}.
\end{equation}
For simplicity, we assume the two sources are linearly uncorrelated, have mean zero, and are supported on an interval centered at zero (\cref{fig:teaser}a top row; we relax these assumptions in our full theory). Then, for fixed $w_{j1}$ and $w_{j2}$, the optimal (energy efficient) bias should be large enough to keep the representation nonnegative, but no larger:
\begin{equation} \label{eq:bias_optimization}
    b_j = -\min_{s_1, s_2} \left[ w_{j1} s_1 + w_{j2} s_2 \right], \quad b_{j'} = -  \min_{s_1} w_{j1} s_1, \quad b_{j''} = - \min_{s_2} w_{j2} s_2,
\end{equation}
where the minimisations are over the empirical distribution $p(s_1, s_2)$.
Now that both representations are specified, we can compare their costs. In our problem modularity is driven only by the activity loss (\cref{app:simpler_maths_version}). Further, in this simplified setting, most terms in the activity loss are zero or cancel, and one finds that the mixed selective~\cref{eq:mixed_selective_intuition} case uses less energy than the modular~\cref{eq:modular_intuition} when
\begin{equation} \label{eq:intuitive_eqn}
    b_{j}^2 < b_{j'}^2 + b_{j''}^2.
\end{equation}
The key takeaway lies in how $b_j$ is determined by a joint minimization over $s_1$ and $s_2$ \cref{eq:bias_optimization}. Assume positive $w_{j1}$ and $w_{j2}$; then if $s_1$ and $s_2$ take their minima simultaneously, as in the middle row of \cref{fig:teaser}a, the mixed bias must be large:
\begin{equation}
b_j = -\min_{s_1, s_2} \left[ w_{j1} s_1 + w_{j2} s_2 \right] = -  \min_{s_1}[w_{j1} s_1] - \min_{s_2} [w_{j2} s_2] = b_{j'} + b_{j''}
\end{equation}
In this case, the energy of the mixed solution will always be worse than the modular, since $b_j^2 = (b_{j'} + b_{j''})^2 > b_{j'}^2 + b_{j''}^2$. Alternatively, mixing will be preferred when $s_1$ and $s_2$ \textit{do not} take on their most negative values at the same time, as in the bottom row of ~\cref{fig:teaser}a, since then $b_j$ can be smaller while maintaining positivity, and the corresponding energy saving satisfies the key inequality \cref{eq:intuitive_eqn}. 
\begin{figure}[h]
    \centering
    \includegraphics[width=1\textwidth]{Figures/New_NEw_Final_Fig1_V2.png}
    \caption{
    a) Top row: Values of two sources across a dataset and their modular encoding with associated costs (shaded regions). Middle row: If the dataset includes the red datapoint (left) then the two variables take their minimal value at the same time, and the mixed encoding must include a large bias, using more energy (right). Bottom row: If the red point is instead missing (left) then the mixed encoding can use a smaller bias while remaining positive and save energy (right). b) Our modularisation conditions~(\cref{thm:linear_autoencoding}) are equivalent to whether the convex hull of the data (orange regions) encloses a data-derived ellipse~(\cref{thm:ellipse_equivalence}). The precision of these conditions is validated in two experiments (rows) in which the source support differs by a single datapoint. Removing this datapoint stops the convex hull from enclosing the key ellipse, resulting in a mixed-selective representation. In the middle column, we plot the weight vectors of the neurons. The top representation is modular as all weight vectors are axis aligned, whereas the bottom contains mixed selective neurons. This is reiterated by extracting the most mixed selective neuron (orange dot) and plotting its activity as a function of the sources (right column). c) Our conditions (left) predict modularisation better than measures of statistical independence such as source multiinformation (right). Here the y axis measures how mixed the representation is by the largest angle between a neuron's weight vector and one of the source axes, and the left x axis measures how much the inequalities are broken by - a rough heuristic for how mixed the optimal solution is.
    }
    \label{fig:teaser}
\end{figure}
\subsection{Precise Conditions for Modularising Biologically Inspired Linear Autoencoders}
We now make precise the intuition developed above. All proofs are deferred to  \cref{app:simpler_maths_version}.
\begin{theorem}\label{thm:linear_autoencoding} 
    Let $\vs \in \mathbb{R}^{d_s}$, $\vz \in \mathbb{R}^{d_z}$, $\mW_{\mathrm{in}} \in\mathbb{R}^{d_z\times d_s}$, $\vb_{\mathrm{in}}\in\mathbb{R}^{d_z}$, $\mW_{\mathrm{out}}\in\mathbb{R}^{d_s\times d_z}$, and $\vb_{\mathrm{out}}\in\mathbb{R}^{d_s}$, with $d_z > d_s$. Consider the constrained optimization problem
    \begin{equation}
        \begin{aligned}
        \min_{\mW_\mathrm{in}, \vb_\mathrm{in}, \mW_\mathrm{out}, \vb_\mathrm{out}}& \quad
            \left\langle ||\vz^{[i]}||_2^2\right\rangle_i + \lambda\left(||\mW_{\mathrm{in}}||_F^2 + ||\mW_{\mathrm{out}}||_F^2\right) \\
        \text{\emph{s.t. }}& \quad \vz^{[i]} = \mW_{\mathrm{in}} \vs^{[i]} + \vb_{\mathrm{in}}, \; \vs^{[i]} =  \mW_{\mathrm{out}} \vz^{[i]} + \vb_{\mathrm{out}}, \; \vz^{[i]} \geq 0,
        \end{aligned}
    \end{equation}
    where $i$ indexes a finite set of samples of $\vs$.
    At the minima of this problem, the representation modularises, i.e. each row of $\mW_{\mathrm{in}}$ has at most one non-zero entry, iff the following inequality is satisfied for all $\vw \in \mathbb{R}^{d_s}$:
    \begin{equation}\label{eq:linear_ae_inequality}
         \left(\min_i [\vw^\top \bar{\vs}^{[i]}]\right)^2 + \sum^{d_s}_{j,j'\neq j}w_{j}w_{j'}\left\langle \bar{s}^{[i]}_j \bar{s}^{[i]}_{j'} \right\rangle_i 
         >
         \sum_{j=1}^{d_s} \left(w_j \min_i\bar{s}^{[i]}_j\right)^2,
    \end{equation}
    where $\bar{\vs} := \vs -\left\langle \vs^{[i]}\right\rangle_{i}$
    and assuming that 
    $\left|\min_i\bar{s}^{[i]}_j\right| \leq \max_i\bar{s}^{[i]}_j \; \forall j \in [d_s]$ w.l.o.g.
\end{theorem}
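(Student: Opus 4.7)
My plan is to reduce the full optimisation to a per-neuron activity-cost comparison by sequentially eliminating redundant parameters, then to identify the cost change under the swap "replace one mixed row by its $d_s$ modular components" as precisely the inequality in the theorem. First, the reconstruction constraint forces $\vb_\mathrm{out} = -\mW_\mathrm{out}\vb_\mathrm{in}$ and $\mW_\mathrm{out}\mW_\mathrm{in} = \mI_{d_s}$; for $\mW_\mathrm{in}$ of rank $d_s$ (required by perfect reconstruction), the Frobenius-minimising $\mW_\mathrm{out}$ is the Moore--Penrose pseudoinverse $\mW_\mathrm{in}^+$. For each neuron $j$, the unconstrained minimiser of $\langle z_j^2\rangle_i$ over $b_{\mathrm{in},j}$ is the centring value $-\langle\vw_j^\top\vs\rangle_i$, which is strictly infeasible under $z_j^{[i]}\geq 0$; the nonnegativity constraint is therefore active, pinning $b_{\mathrm{in},j} = -\min_i\vw_j^\top\vs^{[i]}$. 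Absorbing $\langle\vs\rangle_i$ into the biases (centring $\vs\mapsto\bar{\vs}$) reduces the objective to $C(\mW_\mathrm{in}) = \sum_{j=1}^{d_z}[\vw_j^\top\bar\Sigma\vw_j + \bar m_j^2] + \lambda(\|\mW_\mathrm{in}\|_F^2 + \|\mW_\mathrm{in}^+\|_F^2)$, where $\bar m_j = \min_i\vw_j^\top\bar{\vs}^{[i]}$ and $\bar\Sigma = \langle\bar{\vs}\bar{\vs}^\top\rangle_i$.

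The weight cost depends only on the singular values $\sigma_k$ of $\mW_\mathrm{in}$: $\|\mW_\mathrm{in}\|_F^2 + \|\mW_\mathrm{in}^+\|_F^2 = \sum_k(\sigma_k^2 + \sigma_k^{-2})\geq 2d_s$, attained iff $\mW_\mathrm{in}^\top\mW_\mathrm{in} = \mI_{d_s}$. Using $d_z>d_s$ and the reciprocal scaling freedom (multiplying row $j$ of $\mW_\mathrm{in}$ and dividing column $j$ of $\mW_\mathrm{out}$ by the same factor), both modular and mixed configurations can be rescaled to hit this equality, so weight costs match and modularity is decided purely by activity cost. I would then compare the per-neuron activity cost of a single mixed row $\vw$, namely $\vw^\top\bar\Sigma\vw + (\min_i\vw^\top\bar{\vs}^{[i]})^2$, with the total for its $d_s$-way modular replacement $\{w_k\ve_k\}_k$, namely $\sum_k[w_k^2\bar\Sigma_{kk} + (w_k\min_i\bar s_k^{[i]})^2]$. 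The difference cleanly reduces to $\sum_{k\neq k'}w_kw_{k'}\bar\Sigma_{kk'} + (\min_i\vw^\top\bar{\vs}^{[i]})^2 - \sum_k(w_k\min_i\bar s_k^{[i]})^2$, which is strictly positive iff the stated inequality holds.

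For sufficiency I would argue by contradiction: if the inequality holds for every $\vw$ but some row $\vw_j$ in a putative optimum is mixed, the swap strictly lowers activity cost and (after rebalancing) does not raise weight cost, contradicting optimality. For necessity, when the inequality fails for some $\vw$, I would exhibit a configuration containing a mixed row in direction $\vw$, compensated by modular rows so that $\mW_\mathrm{in}^\top\mW_\mathrm{in} = \mI_{d_s}$ is maintained, which strictly undercuts every modular candidate. The WLOG sign assumption $|\min_i\bar s_j^{[i]}|\leq\max_i\bar s_j^{[i]}$ lets us take each $w_k\geq 0$ in the modular replacement and keeps the active-bias analysis consistent. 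The main technical obstacle is the non-separability of $\|\mW_\mathrm{in}^+\|_F^2 = \mathrm{tr}((\mW_\mathrm{in}^\top\mW_\mathrm{in})^{-1})$ across rows: a naive single-row swap perturbs the full pseudoinverse. Showing, via the $d_z>d_s$ slack and reciprocal scaling, that both configurations can be rebalanced to attain $\sigma_k=1$ simultaneously and cancel the weight-cost difference is the key move that isolates the activity cost as the sole decider and makes the clean per-neuron swap argument work.
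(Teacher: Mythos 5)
Your overall decomposition is the same as the paper's: reduce to a per-neuron activity comparison between a mixed row $\vw$ and its axis-aligned replacement $\{w_k\ve_k\}$, after arguing that the weight penalty can never favour mixing. Your activity computation is correct and yields exactly \cref{eq:linear_ae_inequality}. The gap is in how you neutralise the weight cost. You propose to rescale both configurations so that $\mW_{\mathrm{in}}^\top\mW_{\mathrm{in}} = \mI_{d_s}$ and conclude that ``weight costs match.'' This step fails for two reasons. First, the reciprocal scaling you invoke (scale row $j$ of $\mW_{\mathrm{in}}$ by $c$, column $j$ of $\mW_{\mathrm{out}}$ by $1/c$) scales neuron $j$'s activity by $c$, so normalising the singular values is not a free move: it destroys the matched-coefficient activity comparison that your per-neuron swap relies on. Second, the true optimum does not sit at unit singular values; the optimal encoding magnitudes are set by balancing $\left\langle\|\vz\|^2\right\rangle$ against $\lambda\left(\|\mW_{\mathrm{in}}\|_F^2+\|\mW_{\mathrm{out}}\|_F^2\right)$ and depend on $\lambda$ and the source moments, so you cannot assume $\mW_{\mathrm{in}}^\top\mW_{\mathrm{in}}=\mI_{d_s}$ at any candidate optimum. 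The paper's route avoids this entirely: fix the per-source encoding magnitudes (equivalently the diagonal of the Gram matrix $\mG=\mW_{\mathrm{in}}^\top\mW_{\mathrm{in}}$); then $\|\mW_{\mathrm{in}}\|_F^2=\Tr[\mG]$ is unchanged by the swap, while $\|\mW_{\mathrm{out}}\|_F^2=\Tr[\mG^{-1}]\geq\sum_k 1/\mG_{kk}$ with equality iff $\mG$ is diagonal, i.e.\ iff the source encodings are orthogonal --- which the modular replacement achieves. This holds for \emph{every} choice of encoding magnitudes, which is what lets the activity inequality alone decide modularity. Your own observation that the swap preserves the diagonal of $\mG$ means you are one standard inequality away from the correct argument; the singular-value normalisation is the wrong tool.

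A secondary problem is your necessity construction. You cannot maintain $\mW_{\mathrm{in}}^\top\mW_{\mathrm{in}}=\mI_{d_s}$ with a single mixed row $\vw$ compensated by modular rows, since the remaining rows would have to supply the non-diagonal matrix $\mI-\vw\vw^\top$; you would need additional mixed rows in other directions, whose activity costs are not controlled by the single violated inequality. The paper instead perturbs the optimal modular solution by an infinitesimal mixed neuron with coefficients $\epsilon w_j$ carved out of the existing modular encodings: the activity gain is $\Theta(\epsilon^2)$ (proportional to the violation), the input weight cost is unchanged, and the output weight cost increases only at $O(\epsilon^4)$ because the Gram matrix's off-diagonal entries are $O(\epsilon^2)$ and the modular point is a minimiser of $\Tr[\mG^{-1}]$ at fixed diagonal. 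You should adopt some version of this perturbative argument for the ``only if'' direction.
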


Our theory prescribes a set of inequalities~\cref{eq:linear_ae_inequality} that determine whether an optimal representation is modular. These inequalities come from the difference in activity energy between a modular and mixed solutions, just like the intuition we built up in Section 2.2, and in particular~\cref{eq:intuitive_eqn}.
If a single inequality is broken, the optimal representation is mixed (at least in part); else, it is modular: optimally, each neuron's activity is a function of a
single source. These inequalities depend on two key properties of the sources: the shape of the source distribution's support in extremal regions, and the pairwise source correlations. Remarkably, they do not depend on the energy tradeoff parameter $\lambda$.
These inequalities can be visualised, as done in the wrapped figure below. For a given dataset, $\{\vs^{[i]}\}$ (blue dots), we can calculate $\left\langle \bar{s}^{[i]}_j \bar{s}^{[i]}_{j'} \right\rangle_i$ and each $\min_i\bar{s}^{[i]}_j$ as they are simple functions of the dataset. For all unit $\vw$, we can draw the line
\begin{wrapfigure}[7]{r}{0.24\textwidth}
    \begin{centering}
        \includegraphics[width=0.24\textwidth]{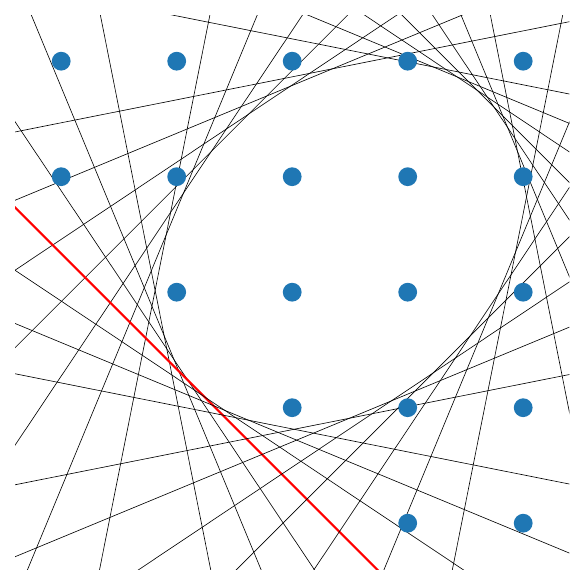}
    \end{centering}
\end{wrapfigure}
\begin{equation}
\vw^\top \vx = \sqrt{\sum_{j=1}^{d_s} \left(w_j \min_i\bar{s}^{[i]}_j\right)^2 - \sum^{d_s}_{j,j'\neq j}w_{j}w_{j'}\left\langle \bar{s}^{[i]}_j \bar{s}^{[i]}_{j'} \right\rangle_i } = \sqrt{\vw^T\mF\vw},
\end{equation} 
where we have defined the following matrix:
\begin{equation}
            \mF = \begin{bmatrix}
        (\min_i s_1^{[i]})^2 & -\langle s_1^{[i]} s_2^{[i]}\rangle_i & -\langle s_1^{[i]}s_3^{[i]}\rangle_i & \hdots \\
        -\langle s_1^{[i]} s_2^{[i]}\rangle_i & (\min_i s_2^{[i]})^2 & -\langle s_2^{[i]}s_3^{[i]}\rangle_i & \hdots \\
        -\langle s_1^{[i]} s_3^{[i]}\rangle_i & -\langle s_2^{[i]}s_3^{[i]}\rangle_i & (\min_i s_3^{[i]})^2  & \hdots \\
        \vdots & \vdots & \vdots & \ddots
    \end{bmatrix}
\end{equation}
Each $\vw$ gives us a line, and if there is at least one line that bounds the source support (such as the red one), then an inequality is broken and the optimal representation is mixed. Else it will be modular. This exercise also motivates the following equivalent statement of our conditions. 

\begin{theorem}\label{thm:ellipse_equivalence} 
    In the same setting as \cref{thm:linear_autoencoding}, define the set $E = \{\vy: \vy^T\mF^{-1}\vy = 1\}$. Then an equivalent statement of \cref{thm:linear_autoencoding} is the representation modularises iff $E$ lies within the convex hull of the datapoints.  
\end{theorem}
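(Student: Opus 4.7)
My plan is to identify \cref{eq:linear_ae_inequality} as the support-function characterisation of the ellipsoid $\hat{E} := \{\vy : \vy^\top \mathbf{F}^{-1}\vy \leq 1\}$ being contained in the convex hull $C := \mathrm{conv}\!\left(\{\bar{\vs}^{[i]}\}_i\right)$. The argument rests on two standard facts from convex analysis: (i) for a compact convex set $K$, its support function $h_K(\vw) := \sup_{\vx \in K} \vw^\top \vx$ determines $K$, and two such sets satisfy $K_1 \subseteq K_2$ iff $h_{K_1}(\vw) \leq h_{K_2}(\vw)$ for every $\vw$; (ii) a one-line Lagrange-multiplier computation on $\max_{\vy} \vw^\top \vy$ subject to $\vy^\top \mathbf{F}^{-1}\vy \leq 1$ yields $h_{\hat{E}}(\vw) = \sqrt{\vw^\top \mathbf{F}\vw}$.

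The steps I would carry out are as follows. The text preceding \cref{thm:ellipse_equivalence} already rearranges \cref{eq:linear_ae_inequality} into the form $\left(\min_i \vw^\top \bar{\vs}^{[i]}\right)^2 > \vw^\top \mathbf{F}\vw$ for every nonzero $\vw$. Since $\bar{\vs}^{[i]}$ is centred, $\min_i \vw^\top \bar{\vs}^{[i]} \leq 0$, so squaring may be undone to give $-\min_i \vw^\top \bar{\vs}^{[i]} > \sqrt{\vw^\top \mathbf{F}\vw}$. Observing that $-\min_i \vw^\top \bar{\vs}^{[i]} = \max_i (-\vw)^\top \bar{\vs}^{[i]} = h_C(-\vw)$ and that the quantifier ranges over all $\vw$, the condition is equivalent, after $\vw \mapsto -\vw$, to $h_C(\vw) > \sqrt{\vw^\top \mathbf{F}\vw} = h_{\hat{E}}(\vw)$ for every nonzero $\vw$. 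Fact (i) then delivers $\hat{E} \subseteq C$, with the strictness of the inequality corresponding to $\hat{E} \subset \mathrm{int}(C)$. Because $C$ is convex and $E = \partial \hat{E}$, the containment $E \subseteq C$ is equivalent to $\hat{E} \subseteq C$, yielding the claimed equivalence.

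The main obstacle is ensuring $\mathbf{F}$ is positive definite so that $\hat{E}$ is a genuine bounded ellipsoid: $\mathbf{F}$ is symmetric by construction, but definiteness is not automatic from the data alone. If $\vw^\top \mathbf{F}\vw \leq 0$ in some direction, the inequality of \cref{thm:linear_autoencoding} is satisfied trivially, but the ellipsoid degenerates and requires a short limiting or separation argument to interpret geometrically. A secondary subtlety is that the strict inequality in \cref{thm:linear_autoencoding} corresponds to $\hat{E} \subset \mathrm{int}(C)$ rather than mere containment; the phrase ``lies within'' in \cref{thm:ellipse_equivalence} should be read in this strict sense for the iff to be exact. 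Beyond these technicalities, the proof is simply the classical duality between compact convex bodies and their support functions.
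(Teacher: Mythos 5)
Your proposal is correct and is essentially the paper's own argument: the paper likewise rewrites \cref{eq:linear_ae_inequality} as $(\max_i \vw^\top\vs^{[i]})^2 \geq \vw^\top\mF\vw$ and computes $\max_{\vy\in E}\vw^\top\vy = \sqrt{\vw^\top\mF\vw}$ by the same Lagrange step, then proves the two directions of the support-function containment criterion by hand (convex combinations for sufficiency, the facet representation $\mB\vx\leq\vb$ for necessity) rather than quoting it. Your explicit flags about indefinite $\mF$ and the strict-versus-non-strict inequality are worth keeping, since the paper handles the former only with a remark and silently switches between $>$ and $\geq$.
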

This therefore provides a simple test: create $\mF$ and draw the set $E$ which, if $\mF$ is positive definite (as it often is), is an ellipse as shown in the wrapped figure above. The optimal representation modularises iff the convex hull of the datapoints encloses $E$,~\cref{fig:teaser}b. If $\mF$ has some positive and some negative eigenvalues, then the set $E$ is unbounded, and the optimal representation must mix.

\subsection{Range Independent Variables Modularise}
To clarify our result we present a particularly clean special case.
\begin{corollary}\label{thm:range_ind_corollary}
    In the same setting as~\cref{thm:linear_autoencoding} the optimal representation modularises if all sources are pairwise extreme-point independent, i.e. if for all $j,j' \in [d_s]^2$:
    \begin{equation}
        \min_i \bigg[s_j^{[i]} \bigg| s^{[i]}_{j'} \in \left\{\max_{i'} s^{[i']}_{j'}, \min_{i'} s^{[i']}_{j'}\right\}\bigg] = \min_i s_j^{[i]}.
    \end{equation}
\end{corollary}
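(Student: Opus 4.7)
The plan is to apply \cref{thm:linear_autoencoding} directly: it suffices to verify inequality \cref{eq:linear_ae_inequality} for every $\vw \in \mathbb{R}^{d_s}$ with two or more nonzero entries (for $\vw$ supported on a single coordinate, both sides of \cref{eq:linear_ae_inequality} coincide trivially, so nothing to check). Throughout I write $\mu_j := \min_i \bar{s}_j^{[i]}$ and $M_j := \max_i \bar{s}_j^{[i]}$, so that $\mu_j \le 0 \le M_j$, and the w.l.o.g.\ assumption of \cref{thm:linear_autoencoding} gives $|\mu_j| \le M_j$.

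First I would restate extreme-point independence as a concrete existence statement about the dataset: for every ordered pair $(j, j')$, there is a data index $i^\star(j, j')$ with $\bar{s}_j^{[i^\star]} = \mu_j$ and $\bar{s}_{j'}^{[i^\star]} \in \{\mu_{j'}, M_{j'}\}$. Defining $t_j := \mu_j$ if $w_j > 0$ and $t_j := M_j$ if $w_j < 0$, I would use these anchor points to establish the upper bound
$$
\min_i \vw^\top \bar{\vs}^{[i]} \;\le\; \sum_j w_j t_j ,
$$
so that $(\min_i \vw^\top \bar{\vs}^{[i]})^2 \;\ge\; \sum_j w_j^2 t_j^2 + 2 \sum_{j<j'} w_j w_{j'} t_j t_{j'}$.

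Substituting this lower bound into \cref{eq:linear_ae_inequality} and absorbing the diagonal surplus $\sum_{j:\, w_j < 0} w_j^2 (M_j^2 - \mu_j^2) \ge 0$ via $|\mu_j| \le M_j$, the inequality reduces to a cross-term statement of the form
$$
2 \sum_{j<j'} w_j w_{j'} \bigl( t_j t_{j'} + \langle \bar{s}_j^{[i]} \bar{s}_{j'}^{[i]} \rangle_i \bigr) \;\ge\; 0 ,
$$
with strict positivity whenever $\vw$ is genuinely mixed. I would close this by invoking the anchor points once more: each $i^\star(j, j')$ contributes $\bar{s}_j \bar{s}_{j'} = \mu_j \cdot e_{j'}$ with $e_{j'} \in \{\mu_{j'}, M_{j'}\}$ to the empirical average, so $\langle \bar{s}_j \bar{s}_{j'} \rangle_i$ remains sign-aligned with $t_j t_{j'}$ in a controlled way, which combined with $|\mu_j| \le M_j$ yields the required (strict) positivity.

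The hardest step is justifying the upper bound $\min_i \vw^\top \bar{\vs}^{[i]} \le \sum_j w_j t_j$: pairwise extreme-point independence only guarantees two-coordinate corners of the support, so in general no single data index may simultaneously realise the desired extreme value on every coordinate. I expect to close this gap by a pair-by-pair decomposition --- splitting the quadratic in $\vw$ into pair contributions and verifying each at its own anchor $i^\star(j, j')$ --- which dovetails naturally with the pairwise structure of the matrix $\mF$ appearing in \cref{thm:ellipse_equivalence}. Degenerate cases in which \cref{eq:linear_ae_inequality} is attained with equality --- for instance when the convex hull of the data is tangent to the ellipse $E$ --- correspond to marginal modularisation lying in the closure of the modular regime and can be handled by a short perturbation argument.
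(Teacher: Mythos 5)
Your overall strategy --- verifying \cref{eq:linear_ae_inequality} directly for every mixed $\vw$ --- is the same comparison the paper makes, but two of your steps have genuine gaps. First, the cross-term step. You need $w_j w_{j'}\bigl(t_j t_{j'} + \langle\bar s_j^{[i]}\bar s_{j'}^{[i]}\rangle_i\bigr)\ge 0$, and you try to get it by arguing that the anchor points keep $\langle\bar s_j^{[i]}\bar s_{j'}^{[i]}\rangle_i$ ``sign-aligned'' with $t_jt_{j'}$. That argument does not work: the empirical covariance is an average over the \emph{whole} dataset, and the presence of a few corner points says nothing about its sign --- you can populate all four corners of the $(s_j,s_{j'})$ rectangle and still make the covariance as negative (or positive) as the range permits by where you put the bulk of the mass. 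The fact you need is actually true \emph{unconditionally}, and it is the heart of the paper's proof: writing $m_j:=\min_i w_j\bar s_j^{[i]}=w_jt_j$, one has
\begin{equation*}
m_jm_{j'}+w_jw_{j'}\bigl\langle\bar s_j^{[i]}\bar s_{j'}^{[i]}\bigr\rangle_i=\bigl\langle (w_j\bar s_j^{[i]}-m_j)(w_{j'}\bar s_{j'}^{[i]}-m_{j'})\bigr\rangle_i\ge0,
\end{equation*}
because each factor is pointwise nonnegative by definition of $m_j$ (the linear cross terms in the expansion vanish since $\bar\vs$ is mean-zero). The paper packages exactly this by decomposing the mixed neuron as $\sum_j(w_j\bar s_j^{[i]}-m_j)$, a sum of nonnegative per-source signals, and expanding the square: the diagonal terms each dominate the corresponding modular energy via $|\mu_j|\le M_j$, and the cross terms are averages of products of nonnegatives. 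You should replace your sign-alignment claim with this identity.

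Second, your pivotal bound $\min_i\vw^\top\bar\vs^{[i]}\le\sum_jw_jt_j$ is really an equality claim, since the reverse inequality $\min_i\sum_j\ge\sum_j\min_i$ always holds; equality requires a \emph{single} datapoint sitting at the joint corner $(t_1,\dots,t_{d_s})$. Pairwise extreme-point independence does not supply such a point for $d_s\ge3$ (take all vertices of a cube except one: every pair of coordinates still realises all four of its corners, yet the deleted joint corner is absent), and your proposed ``pair-by-pair decomposition'' is not a proof --- the quantity $(\min_i\vw^\top\bar\vs^{[i]})^2$ does not split over pairs. To be fair, the paper's own proof takes the min inside the sum in one line and therefore leans on the same joint-corner property, so you are not diverging from the paper here; you have correctly identified, but not closed, the one step where the hypothesis as literally stated does less work than the argument needs for more than two sources.
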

In other words, if the joint distribution is supported on all extremal corners, the optimal representation modularises. 

As presented, our theory has some limitations. It focuses on sources that are 1-dimensional, and uses a specific choice of activity norm, the L2, when others might be more reasonable. In fact, however, the core result that it is the independence of the variables' range (i.e. how rectangular they are) that drives modularity is very broadly true. In~\cref{app:other_activity_norms} we show that this core result generalises to other activity norms, and in~\cref{app:multi_dimensional_vars} we show that it also applies to the modularisation of multi-dimensional variables such as angles on a 2D circle. Thus, our key takeaway is that the independence of variables' range is what determines whether the optimal representation is modular.

\subsection{Empirical Tests}

\textbf{Validation of linear autoencoder theory.} We show our inequalities correctly predict modularisation. In particular, as an illustration of the precision of our theory, we create a dataset which transitions from inducing modularising to mixing via the removal of a single critical datapoint (\cref{fig:teaser}b). Further, we generate many datasets, create the optimal representation, and measure the angle $\theta$ between the most-mixed neuron's weight vector and its closest source direction, a proxy for modularity. The left of \cref{fig:teaser}c shows that our theory correctly predicts which datasets are modular ($\theta = 0$). Further, despite our theory being binary (will it modularise or not?), empirically we see that the degree to which the inequalities in \cref{thm:linear_autoencoding} are broken is a good proxy for how mixed the optimal representation is. Finally, on the right of \cref{fig:teaser}c we show that on the same datasets a measure of source statistical interdependence, as used in previous work, does not predict modularisation.

\textbf{Predictions beyond our theory.} From our theory we extract qualitative trends to empirically test in more complex settings. Since extremal points play an outsized role in determining modularisation, we consider three trends that highlight these effects. 
(1) Datasets from which successively smaller corners have been removed should become successively less mixed, until at a critical threshold the representation modularises. (2) It is vital that it is not just any data but corner slices that are removed. Removing similar amounts of random or centrally located data from the dataset should not cause as much mixing. (3) Introducing correlations into a dataset while preserving extreme-point or range independence should preserve modularity relatively well.

\vspacebeforesection
\section{Modularisation in Biologically Inspired Nonlinear Feedforward Networks}
\label{sec:nonlinear_feedforward}
\vspaceaftersection

Motivated by our linear theoretical results, we explore how closely biologically constrained nonlinear networks match our predicted trends. We study nonlinear representations with linear and nonlinear decoding in supervised and unsupervised settings. Surprisingly, coarse properties predicted by our linear theory generalise empirically to these nonlinear settings~(\cref{fig:nonlinear_feedforward}).

\begin{figure}[h]
    \centering
    \includegraphics[width=\textwidth]{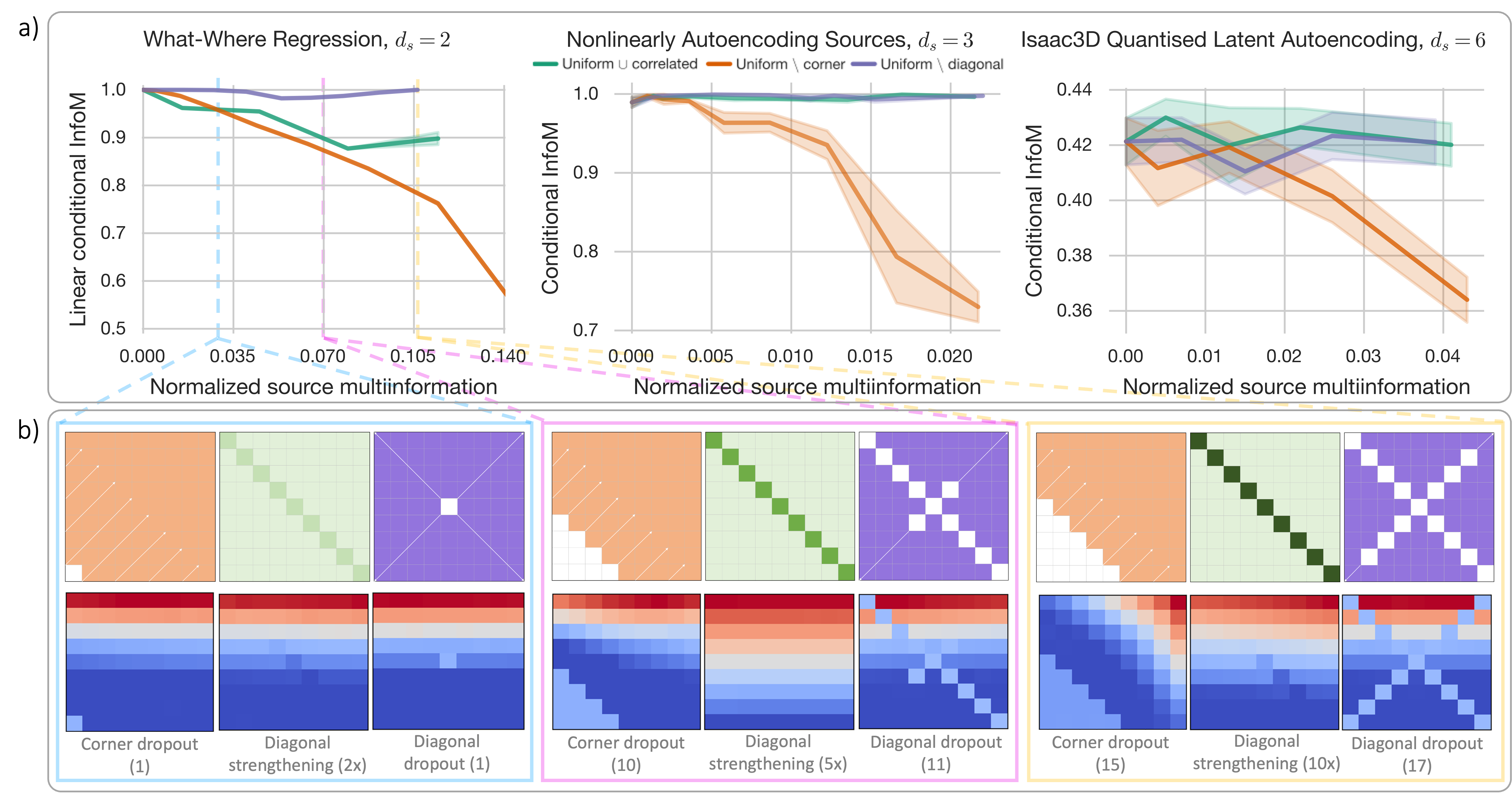}
    \caption{
    a) We study three tasks of varying source dimensionality $d_s$, model complexity, supervision signal, and input data modality~(\cref{sec:nonlinear_feedforward}). For each task, we plot conditional InfoM, a measure of modularity, against normalised source multiinformation (NSI), a measure of the statistical dependencies in the sources. Across the three tasks, we observe that cutting corners severely degrades modularity (orange), whereas equivalently changing the NSI in ways that preserve the rectangular support boundary (green, purple) affects modularity less. Shaded regions denote standard error of the mean.
    b) For What-Where Regression, we visualise the source distributions at various NSI values (top) as well as the tuning of an example neuron. The $x$ ($y$) axes correspond to different what (where) values, and the colour shows the neuron's response. All neurons begin modular (left three plots), dropping corners heavily degrades modularity (left plot of subsequent groups of three).
    }
    \label{fig:nonlinear_feedforward}
\end{figure}

\textbf{Metrics for representational modularity and inter-source statistical dependence.} To quantify the modularity of a nonlinear representation, we design a family of metrics called conditional information-theoretic modularity (CInfoM), an extension of the InfoM metric proposed by \citet{hsu2023disentanglement}. Intuitively, a representation is modular if each neuron is informative about only a single source. We therefore calculate the conditional mutual information between a neuron's activity and each source variable given all other sources. The conditioning on other sources is necessary to remove the effect of sources leaking information about each other; prior works consider independent sources or do not account for this effect. CInfoM then measures the extent to which a neuron specialises to its favourite source, relative to its informativeness about all sources. In order to compare multiple schemes that change $p(\vs)$ in different ways, we report normalised source multiinformation (NSI) as a measure of source statistical interdependence. NSI involves estimating the source multiinformation (aka total correlation) $\KL{p(\vs)}{\textstyle\prod_{i=1}^{d_s}p(s_i)}$ and normalizing by $\sum_{i=1}^{d_s} H(s_i) - \max_{i} H(s_i)$.
We defer further exposition of the CInfoM metric to \cref{app:metrics_section}.

\textbf{What-where regression.} Inspired by the modularisation of what and where into the ventral and dorsal visual streams, we present nonlinear networks with simple binary images in which one pixel is on. The network is trained to concurrently report within which region of the image (``where''), and where within that region (``what'') the on pixel is found, producing two outputs which we take as our sources, each an integer between one and nine. (More complex inputs or one-hot labels also work, see~\cref{app:what_where}.) We regularise the activity and weight energies, and enforce nonnegativity using a ReLU. If what and where are independent from one another, e.g. both uniformly distributed, then under our biological constraints (but not without them, see~\cref{app:what_where}) the hidden layer modularises what from where. Breaking the independence of what and where leads to mixed representations in patterns that qualitatively agree with our theory (\cref{fig:nonlinear_feedforward}a left): cutting corners from the source support causes increasing mixing. Conversely, making other more drastic changes to the support, such as removing the diagonal, does not induce mixing. Similarly, introducing source correlations while preserving the rectangular support introduces less mixing when compared to corner cutting that induces the same amount of mutual information between what and where. The various source distributions at different NSI values are visualised in \cref{fig:nonlinear_feedforward}b.

\textbf{Nonlinear autoencoding of sources.} Next we study a nonlinear autoencoder trained to autoencode a set of source variables.
Again we find (\cref{fig:nonlinear_feedforward}a middle) that under biological constraints independent source variables ($\text{NSI} = 0$) result in modular latents and that source corner cutting (orange) induces far more latent mixing compared to introducing source correlations while preserving rectangular support (green), or removing central data (purple). 

\textbf{Disentangled representation learning of images.} 
Finally, for an experiment involving high-dimensional image data, we turn to a recently introduced state-of-the-art disentangling method, quantised latent autoencoding (QLAE; \citet{hsu2023disentanglement,hsu2024tripod}). QLAE is the natural machine learning analogue to our biological constraints. It has two components: (1) the weights in QLAE are heavily regularised, like our weight energy loss, and (2) the latent space is axis-wise quantised, introducing a privileged basis with low channel capacity. In our biologically inspired networks, nonnegativity and activity regularisation conspire to similarly structure the representation: nonnegativity creates a preferred basis, and activity regularisation encourages the representation to use as small a portion of the space as possible. We study the performance of QLAE trained to autoencode a subset of the Isaac3D dataset~\citep{nie2019high}, a naturalistic image dataset with well defined underlying latent dimensions. We find the same qualitative patterns: corner cutting is a more important determinant of mixing than range-independent correlations or the removal of centrally located data~(\cref{fig:nonlinear_feedforward}a right).

\vspacebeforesection
\section{Modularisation in Biologically Inspired Recurrent Networks}
\vspaceaftersection
Compared to feedforward networks, recurrent neural networks (RNNs) are often a much more natural setting for neuroscience. Excitingly, the core ideas of our analysis for linear autoencoders also apply to recurrent dynamical formulations, and similarly extend to experiments with nonlinear networks.

\subsection{Linear RNNs}
\textbf{Linear sinusoidal regression.} Linear dynamical systems can only autonomously implement exponentially growing, decaying, or stable sinusoidal functions. We therefore study linear RNNs with biological constraints trained to model stable sinusoidal signals at certain frequencies:
\begin{equation}
    \vz(t+\delta t) = \mW_{\mathrm{rec}}\vz(t) + \vb_{\mathrm{rec}}, \quad \mW_{\mathrm{out}}\vz(t) = \begin{bmatrix}
        \cos(\omega_1 t + \phi_1) \\ \cos(\omega_2 t + \phi_2)
    \end{bmatrix}, \quad \vz(t) \geq \mathbf{0}.
\end{equation}
We study the optimal nonnegative, efficient, recurrent representations, $\vz_t$, and show that whether the representations of the two frequencies within $\vz_t$ modularises depends on their ratio. We prove and verify empirically that if one frequency is an integer multiple of the other the encodings mix, whereas if their ratio is irrational they should modularise. Further, we show empirically, and prove in limited settings, that rational non-harmonic ratios should modularise (\cref{app:RNN_theory}). The intuition for this result is much the same as the linear autoencoding setting: the natural notions of sources are the signals $(\cos(\omega_1 t), \sin(\omega_1 t), \cos(\omega_2 t), \sin(\omega_2 t))$. Using these sources, we must simply ask: does their support allow for a reduction in activity energy via mixing? In \cref{fig:RNN}a, we visualise the source support for the three prototypical relationships between $\omega_1$ and $\omega_2$: irrational, rational (but not harmonic), and harmonic. Results of neural network verifications are in \cref{fig:RNN}b (details \cref{app:RNNs}). In the irrational case, the source support is essentially rectangular, so the model modularises. In the harmonic case, large chunks of various corners are missing from the source support, so the model mixes. In the rational case, even though the source support is quite sparse, the corners are sufficiently present such that modularising is still optimal.

\textbf{Modularisation of grid cells.} We now show that these spectral ideas can explain modules of grid cells. Grid cells are neurons in the mammalian entorhinal cortex that fire in a hexagonal lattice of positions \citep{hafting2005microstructure}. They come in groups, called modules; grid cells within the same module have receptive fields (firing patterns) that are translated copies of the same lattice, and different modules are defined by their different lattice~\citep{stensola2012entorhinal} (for further literature review see~\cref{app:related_work}). Current theories suggest that the grid cell system can be modelled as a RNN with activations built from linear combinations of frequencies~\citep{dorrell2023actionable}, just like the linear RNNs considered here. Importantly, these grid cell theories use the same biological constraints as in this work and show that modules form because the optimal code contains non-harmonically related frequencies that are encoded in different neurons (\cref{fig:RNN}e top). This modularisation can now be theoretically justified in our framework, since non-harmonic frequencies are range-independent and so should be modularised (\cref{fig:RNN}e bottom).

\begin{figure}[t]
    \vspace{-0.5em}
    \centering
    \includegraphics[width=\textwidth]{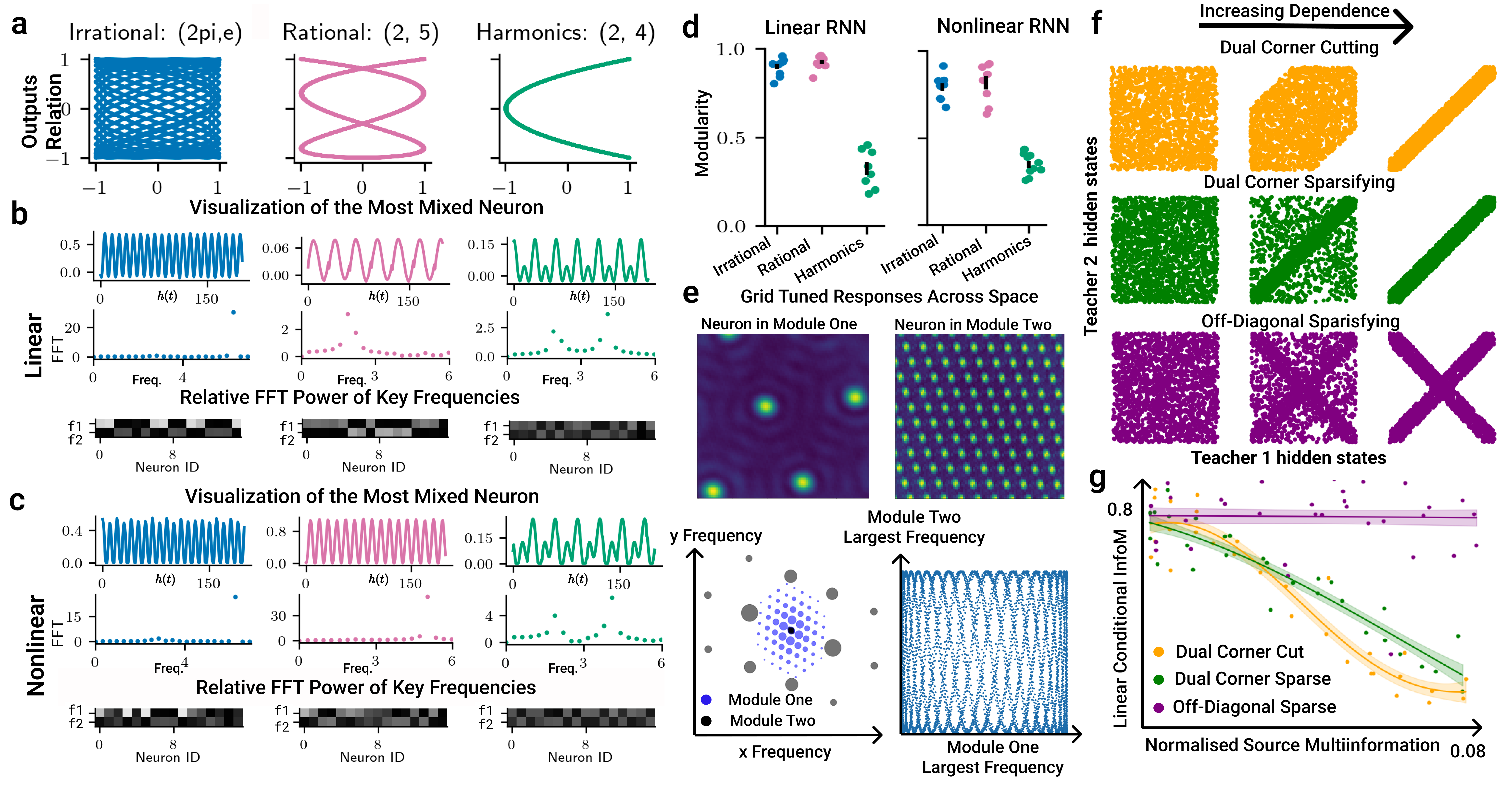}
    \caption{a) Source support visualisations: $\cos(\omega_1 t)$ vs. $\cos(\omega_2 t)$. b) The activity of the most mixed neuron in an optimal solution, the same neuron's Fourier spectrum, and the population's Fourier spectrum in the two task frequencies show clear modularisation in the irrational and rational cases, and clear mixing the in the harmonic case. c) Same as (b), but with \textit{nonlinear} RNNs. d) Modularity of RNNs trained on 10 frequency pairs. e) Top: Two optimal grid cells from \cite{dorrell2023actionable} in different modules. Bottom: The modularisation of these two lattices can be understood from the range-independence of their constituent frequencies, as shown in the joint distribution of the most significant frequency in each module. f) Plot of joint distribution of hidden state activity for two teacher RNNs neurons as we increase the dependence in three different ways. g) Trends of modularity scores of student RNN in three different cases qualitatively agree with our theory.}
    \vspace{-1em}
    \label{fig:RNN}
\end{figure}
\subsection{Nonlinear RNNs}
\textbf{Mixed sinusoidal regression for nonlinear RNNs.} To test how our ideas generalise beyond linear networks, we train nonlinear ReLU RNNs with biologically inspired constraints to perform a frequency mixing task (details \cref{app:RNNs}). We provide a pulse input $
    P_{\omega}(t) = \mathbb{I}\left[\operatorname{mod}_\omega (t) = 0\right]
$
at two frequencies, and the network has to output the resulting ``beats'' and ``carrier'' signals:
\begin{equation}
    \vz(t+\Delta t) = \text{ReLU}\left(\mW_{\text{rec}} \vz(t) + \mW_{\text{in}}
    \begin{bmatrix}
        P_{\omega_1}(t) \\ P_{\omega_2}(t)
    \end{bmatrix} + \vb_{\text{rec}}\right), \;  
    \begin{bmatrix}
        \cos([\omega_1-\omega_2] t) \\ \cos([\omega_1+\omega_2] t)
    \end{bmatrix} = \mW_{\text{out}} \vz_t + \vb_{\text{out}}.
\end{equation}
Results are in \cref{fig:RNN}c. Identical range-dependence properties but applied to the frequencies $\omega_1-\omega_2$ and $\omega_1+\omega_2$ determine whether or not the network modularises: irrational, range-independent frequencies modularise; harmonics, with their large missing corners, mix; and other rationally related frequencies are range-dependent but no sufficient corner is missing, so they modularise.

\textbf{Modularisation in nonlinear teacher-student distillation.} To test our predictions of when RNNs modularise, but in settings more realistic than pure frequencies, we generate training data trajectories from randomly initialised teacher RNNs with tanh activation function, and then train student RNNs (with a ReLU activation function) on these trajectories. The student's representation is constrained to be nonnegative (via its ReLU) and has its activity and weights regularised (see~\cref{sec:james_RNN} for details). Using carefully chosen inputs at each timestep, we are able to precisely control the teacher RNN hidden state activity (i.e., the source distribution). This allows us to change correlations/statistical independence of the hidden states, while either maintaining or breaking range independence. We consider three settings~(\cref{fig:RNN}f). First, when statistical and range independence are progressively broken (in orange). Second, where statistical, and to a lesser extent range, independence gets progressively broken (in green). Third, where only statistical independence gets progressively broken (in purple).
We observe that, in line with our theory, preserving the corner points preserves modularity (\cref{fig:RNN}g purple), while removing them breaks it (orange). Further, spasifying the corners breaks modularity, but more slowly than removing corners (green).

\vspacebeforesection
\section{Modular or Mixed Codes for Space \& Reward}\label{sec:entorhinal}
\vspaceaftersection

We now apply our results to neuroscience to understand a puzzling difference in two seemingly similar recordings from entorhinal cortex. This brain area has been thought to contain precisely modular neurons, such as the grid cell code for self-position~\citep{hafting2005microstructure}, object vector cells that fire at a particular displacement from objects~\citep{Hoydal2019ovc}, and heading direction cells~\citep{taube2007head}. Two recent papers examined the influence of rewarded locations on the grid cell code and find differing effects on the modularity of grid cells. \citet{butler2019remembered} find that rewards rotate the grid cells, but preserve their pure spatial coding, while \citet{boccara2019entorhinal} find that the grid cells warp towards the rewards, becoming mixed-selective to reward and space. 

\citet{whittington2022disentanglement} study this discrepancy and point to the importance of the reward distribution in these two tasks: \citet{boccara2019entorhinal} fix the positions of the possible rewards during one day, whereas \citet{butler2019remembered} alternate the animals between periods of free-foraging for randomly placed rewards and periods of specific rewarded locations. However, the arguments of \cite{whittington2022disentanglement}, which rely on source independence, are insufficient to explain these modularisation effects, as in neither case are reward and position independent; even in the experiments of \citet{butler2019remembered} there are regions of space that are much more likely to be rewarded.

\begin{figure}[h]
    \centering
    \includegraphics[width=\textwidth]{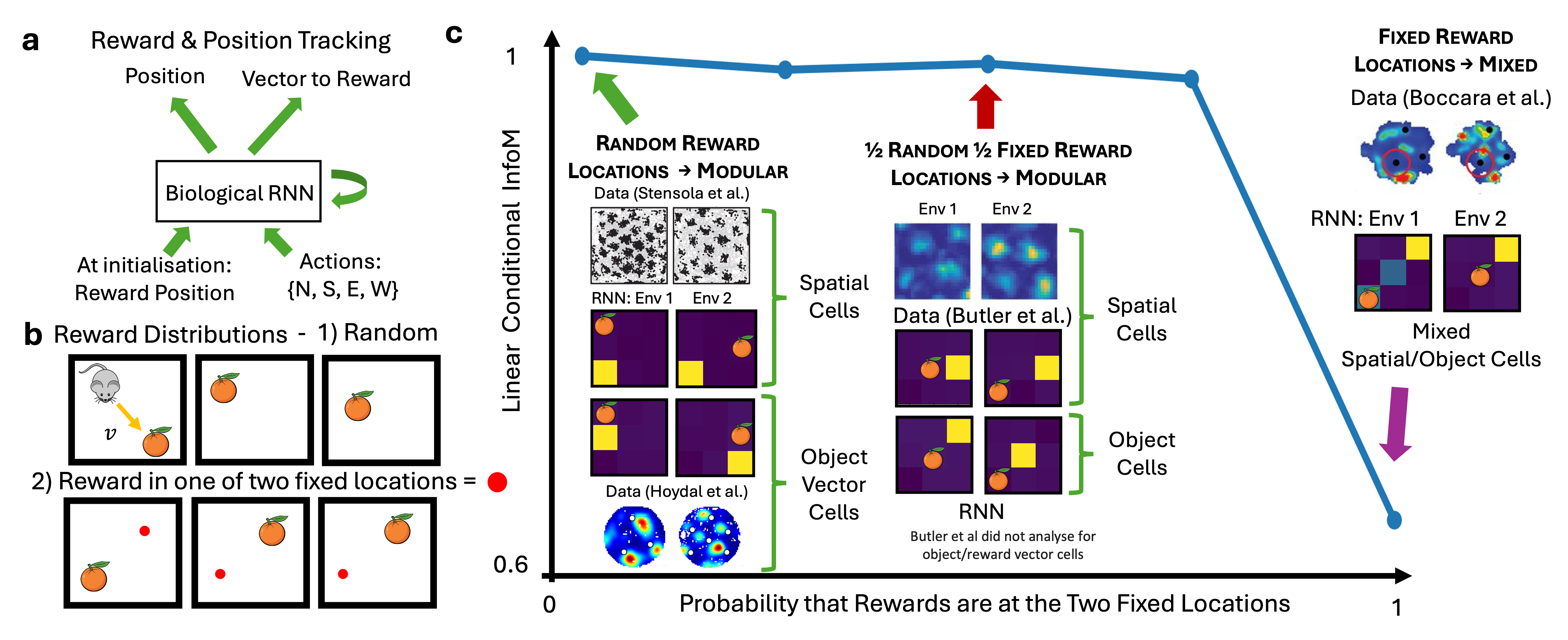}
    \caption{a, b) We train linear RNNs to integrate a sequence of actions to report their current position and displacement from a reward. In some proportion of the rooms, rewards are randomly scattered; in the rest, they are in one of two fixed locations. c) Matching the theory, if some of the rooms are random then reward and position are range-independent, and their representation modularises, as in \citet{stensola2012entorhinal} or \citet{butler2019remembered}; if in all rooms the rewards occur in one of two fixed locations the representation mixes, with neurons encoding both reward and self position, as in \citet{boccara2019entorhinal}.}
    \label{fig:Ent}
\end{figure}

However, with our improved understanding of modularisation, this makes sense. Despite \citet{butler2019remembered} correlating reward and position, critically, they leave them \emph{range independent}---all combinations of reward and position are possible. On the other hand, \citet{boccara2019entorhinal} make certain combinations of reward and position impossible, making them not just correlated, but \emph{range correlated}. As such, our theory matches this modularisation pattern. To test this we train a linear RNN with biological constraints (details \cref{app:Neuro_RNNs}) to report both its self-position and displacement from a reward as it moves around an environment~(\cref{fig:Ent}). We train RNNs on settings with different relationships between reward and position; for some RNNs the rewards are in fixed positions in every environment (range dependent and statistically dependent), for others reward and position are uniformly random in each room (range independent and statistically independent), and a final group experiences both settings in different proportions (range independent but statistically dependent).
As we vary the proportion of fixed rooms, we find the optimal representation stays  modular, containing separate spatial and reward-vector cells as in \citet{butler2019remembered}, until all the sampled rooms have fixed reward positions, at which point neurons become mixed selective to reward and self-position, as in \citet{boccara2019entorhinal}. As such, our theory now covers all known grid cell modularity results.

\textbf{Missed sources and mixed selectivity.} In contrast to the beautifully modular neurons in entorhinal cortex, recent work has highlighted neurons with mixed tuning to multiple navigational variables, such as position, heading direction, or speed \citep{hardcastle2017multiplexed}. We use our theory to highlight a potential artifact of this type of analysis. A purportedly mixed selective neuron may (in part) be purely selective for another unanalysed variable that is itself correlated with the measured spatial variables or behaviour. We highlight a simple example of this effect in \cref{fig:Mixed_Selectivity}. Imagine a mouse is keeping track of three objects as it moves in an environment~(\cref{fig:Mixed_Selectivity}a); we model this as a linear RNN with biological constraints that must report the displacement of all objects from itself. We make the object positions range independent but correlated. As predicted by our theory, the RNN modularises these range-independent variables such that each neuron only encodes one object~(\cref{fig:Mixed_Selectivity}c). However, if an experimenter were only aware of two of the three objects, they would instead analyse the neural tuning with respect to those two objects, without being able to condition on the third. Due to the statistical dependencies between object positions, they would find mixed selective neurons that are in reality purely coding for the missing object~(\cref{fig:Mixed_Selectivity}d). We do not believe that all neurons in entorhinal cortex are modular; for example, there is clear evidence that neurons co-tuned to position and velocity are meaningfully mixed-selective, implementing the path-integration updates in the grid cell system \citep{vollan2025left}. Indeed, many of the neurons analysed in~\cite{hardcastle2017multiplexed} \emph{likely are} mixed selective. However, it is also seems likely that in more exploratory analyses such as these, mixed selectivity might often arise from missing encoded variables in the analysis.

\begin{figure}[h]
    \vspace{-0.5em}
    \centering
    \includegraphics[clip,width=\textwidth]{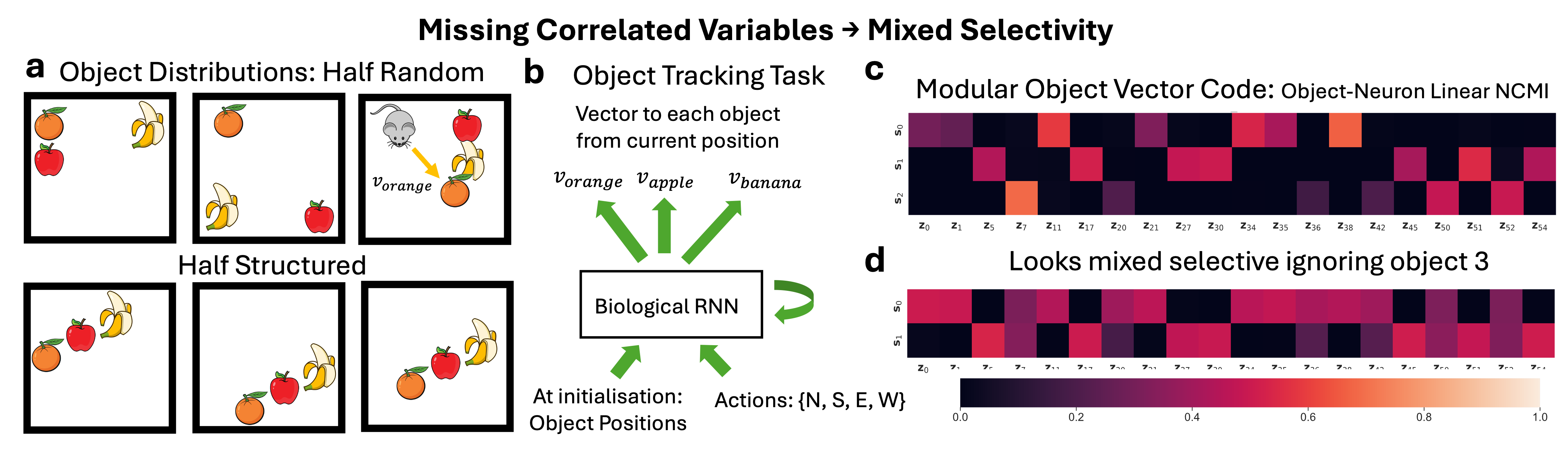}
    \caption{a, b) We train linear RNNs to report displacement to three objects as an agent moves within many rooms. In each room, object instances are either completely random (a, top) or clustered in a random line (a, bottom); object positions are therefore range independent but correlated. c) The neurons are modular: each neuron's activity is conditionally informative of only a single object given the others. d) However, if an experimenter were only aware of two of the three objects, the neurons that purely encode the disregarded object would appear mixed selective due to the statistical dependencies between objects.}
    \label{fig:Mixed_Selectivity}
    \vspace{-1em}
\end{figure}

\section{Discussion}

We present precise constraints on source distributions that cause linear feedforward and recurrent networks with biological constraints to modularise. These constraints depend on the co-range properties of variables, matching patterns of modularisation in both nonlinear networks and the brain.

\textbf{Limitations \& future work.} That said, there remain many aspects of our theory that need improvement. Deriving necessary (rather than just sufficient) modularisation conditions for other norms and of multidimensional sources, extending to nonlinear settings, or to understand a less binarised notion of modularity (rather than perfectly modular or not), or developing a theory of linearly mixed sources as in matrix factorsiation work, are all attractive directions. Further, we assume perfect reconstruction. This leads us to some of our suprising conclusions, like how it is the support of the distribution that matters, even if some parts of the support have very low probability. In~\cref{app:imperfect_recon} we show that allowing the representation to tradeoff reconstruction accuracy with the other losses leads it to ignore very low probability points. Studying this tradeoff further would be interesting. Biologically, beyond the clear need for experimental validation, there are many pertinent effects whose impact on modularity would be interesting to study, such as connection sparsity, anatomical constraints, noise, or requiring additional computational roles from the network. 

\textbf{Mixed selectivity.} In neuroscience our findings add nuance to the ongoing debate over mixed vs. modular neural coding. Theories of \textit{nonlinear} mixed selectivity have argued that, analogously to a nonlinear kernel, such schemes permit linear readouts to decode nonlinear functions of the sources \citep{rigotti2013importance}. This is likely a key part of the mixed selectivity found in some brain areas, like the cerebellum \citep{lanore2021cerebellar}, mushroom body \citep{aso2014neuronal}, and perhaps certain prefrontal or hippocampal representations \citep{bernardi2020geometry, boyle2024tuned}. However, our theory raises the possibility for other explanations of both nonlinear and linear mixed selectivity that do \emph{not} require tasks with nonlinear functions of the sources. First, purely from energy efficiency, our work analytically shows when range-dependent variables should be encoded in linearly mixed-selective representations, and empirically shows the same ideas apply to nonlinear mixed-selective representations,~\cref{fig:nonlinear_feedforward}. Further, since in our theory even statistically dependent variables might be represented modularly, we show that this causes problems for analysis techniques that do not account for all the variables encoded in a population, leading to spurious mixed selectivity,~\cref{fig:Mixed_Selectivity}. Future work could usefully develop principled approaches to distinguishing these options.

\textbf{Disentangling what?} We agree with~\citet{roth2023disentanglement}, who argue that range independence is the more meaningful form of independence for modularising variables. Further, the similarity between our conditions and those from the matrix factorisation literature  (e.g. \citep{tatli2021generalized, tatli2021polytopic}), point to repeated instances of similar ideas. It is intriguing that constraints inspired by biology, which hold promise for disentangling~\citep{whittington2022disentanglement}, and have links to state-of-the-art disentangling methods~\citep{hsu2023disentanglement, hsu2024tripod}, are sensitive to precisely this form of independence. %


\textbf{Reproducibility Statement} Our theoretical work is fully detailed in the appendices, and code to reproduce our empirical work can be found at https://github.com/kylehkhsu/modular.

\textbf{Acknowledgements} The authors thank Basile Confavreux, Pierre Glaser, \& Bariscan Bozkurt for conversations about proof techniques.

We thank the following funding sources: Gatsby Charitable Foundation to W.D., J.H.L. \& P.E.L.; Sequoia Capital Stanford Graduate Fellowship and NSERC Postgraduate Scholarship – Doctoral to K.H.; Wellcome Trust (110114/Z/15/Z)
to P.E.L.; Wellcome Trust (219627/Z/19/Z) to J.H.L.; Sir Henry Wellcome Postdoctoral Fellowship (222817/Z/21/Z) to
JCRW.; Wellcome Principal Research Fellowship (219525/Z/19/Z), Wellcome Collaborator award (214314/Z/18/Z), and
Jean-François and Marie-Laure de Clermont-Tonnerre Foundation award (JSMF220020372) to TEJB.; the Wellcome
Centre for Integrative Neuroimaging is supported by core funding from the Wellcome Trust (203139/Z/16/Z).

\bibliography{bibliography}
\bibliographystyle{plainnat}

\newpage

\appendix

\section{Related Works}
\label{app:related_work}

\textbf{Machine learning.}
The disentangled representation learning community has found inductive biases that lead networks to prefer modular or disentangled representations of nonlinear mixtures of independent variables. Examples include sparse temporal changes \citep{klindt2020towards}, smoothness or sparseness assumptions in the generative model \citep{zheng2022identifiability,horan2021unsupervised}, and axis-wise latent quantisation \citep{hsu2023disentanglement,hsu2024tripod}. Relatedly, studies of tractable network models have identified a variety of structural aspects, in either the task or architecture, that lead to modularisation, including learning dynamics in gated linear networks \citep{saxe2022neural}, architectural constraints in linear networks \citep{jarvis2023specialization,shi2022learning}, compositional tasks in linear hypernetworks and nonlinear teacher-student frameworks \citep{schug2024discovering,lee2024animals}, or task-sparsity in linear autoencoders \citep{elhage2022toy}.

\textbf{Neuroscience.}
Theories prompted by the recognition of mixed-selectivity in biological neurons have argued that nonlinearly mixed selective codes might exist to enable a linear readout to flexibly decode any required categorisation \citep{rigotti2013importance}, suggesting a generalisability-flexibility tradeoff between modular and nonlinear mixed encodings \citep{bernardi2020geometry}. Modelling work has studied task-optimised network models of neural circuits, some of which have recovered mixed encodings \citep{nayebi2021explaining}. However, other models trained on a wide variety of cognitive tasks have found that networks contain meaningfully modularised components \citep{yang2019task, driscoll2024flexible, duncker2020organizing}. \citet{dubreuil2022role} study recurrent networks trained on similar cognitive tasks and find that, depending on the task structure, neurons in the trained networks can be grouped into populations defined by their connectivity patterns. To this developing body our work and its predecessor~\citep{whittington2022disentanglement} add understanding of when simple biological constraints would encourage encodings of many variables to be modularised into populations of single pure-tuned variables, or when mixed-selectivity would be preferred. Compared to~\citep{whittington2022disentanglement} which only works when source variables are statistically independent, our work determines whether the optimal representation is modular for any dataset even when source variables are not independent. This complete understanding of the boundary between mixed and modular codes is what allows us to understand behaviour in both biological and artificial neurons that previously eluded explanation.

\textbf{Independent components analysis and matrix factorization.}
Our linear autoencoding problem~(\cref{thm:linear_autoencoding}) bears some similarity to linear independent components analysis (ICA) and matrix factorization (MF). These assume data $\vy_i$ is linearly generated from a set of constrained sources, $\vy_i = \mA\vs_i$, and build algorithms to recover $\vs_i$ from $\vy_i$. The literature studies different combinations of (1) assumptions on the sources $\vs_i$, e.g. independence \citep{bell1995information}, nonnegativity\citep{plumbley2003algorithms}, or boundedness \cite{inan2014extended}; (2) assumptions on the mixing matrix $\mA$, e.g. nonnegative in nonnegative matrix factorisation and (3) algorithms, often deriving identifiability results: conditions that the data must satisfy for a given algorithm to succeed. Some work even derives biological implementations of such factorisation algorithms \citep{pehlevan2017blind,bozkurt2022biologically}. In our work we study the conditions under which a particular biologically inspired representation modularises a set of bounded sources. We provide conditions on the sources under which the optimal representation is modular, morally similar to identifiability results (indeed, other identifiability results, like ours, sometimes take the form of sufficient spread conditions \citep{tatli2021generalized, tatli2021polytopic}). We differ from previous work in the representation we study (which has only been studied before by \cite{whittington2022disentanglement}), the form of our identifiability result (which is both necessary and sufficient, and has not been derived before), and how we apply these ideas, for example to neural data.

\textbf{Grid Cell Modules} One of the many surprising features of grid cells is their modular structure: each grid cell is a member of one module, of which there are a small number in rats. Grid cells from the same module have receptive fields that are translated versions of one another~\citep{stensola2012entorhinal}. Previous work has built circuit models of such modules showing how they might path-integrate \citep{burak2009accurate}, or has assumed the existence of multiple modules, then shown that they form a good code for space, and that parameter choices such as the module lengthscale ratio can be extracted from optimal coding arguments~\citep{mathis2012optimal,wei2015principle}. However, neither of these directions shows why, of all the ways to code space, a multi-modular structure is best. \cite{dorrell2023actionable} study a normative problem that explains the emergence of multiple modules grid cells as the best of all possible codes, but their arguments for the emergence of multiple grid modules are relatively intuitive. In this work we are able to formalise parts of their argument.

\newpage 

\section{Positive Linear Autoencoder Theory}\label{app:simpler_maths_version}

In this section we derive our main mathematical results. We prove the main theorem,~\cref{app:full_thereoM_simple_version}, the equivalent statement in terms of the convex hull of the data,~\cref{app:ellipse_equivalence}, and the corollary on range-independent variables~\cref{app:range_independent}. We then generalise this range-independent result to all activity norms,~\cref{app:other_activity_norms}, and to multidimensional variables,~\cref{app:multi_dimensional_vars}.

\subsection{Full Theoretical Treatment}\label{app:full_thereoM_simple_version}

Our proof of the main result,~\cref{thm:linear_autoencoding}, takes the following strategy. First, we show that for a fixed encoding size, the weight loss is always minimised by orthogonalising, one example of which is modularising. Then we study when the activity loss is also minimised by modularising, and use that to tell us about the whole loss. Recall~\cref{thm:linear_autoencoding}:
\begin{theorem}[Main Result]\label{thm:linear_autoencoding_app} 
    Let $\vs \in \mathbb{R}^{d_s}$, $\vz \in \mathbb{R}^{d_z}$, $\mW_{\mathrm{in}} \in\mathbb{R}^{d_z\times d_s}$, $\vb_{\mathrm{in}}\in\mathbb{R}^{d_z}$, $\mW_{\mathrm{out}}\in\mathbb{R}^{d_s\times d_z}$, and $\vb_{\mathrm{out}}\in\mathbb{R}^{d_s}$, with $d_z > d_s$. Consider the constrained optimization problem
    \begin{equation}
        \begin{aligned}
        \min_{\mW_\mathrm{in}, \vb_\mathrm{in}, \mW_\mathrm{out}, \vb_\mathrm{out}}& \quad
            \left\langle ||\vz^{[i]}||_2^2\right\rangle_i + \lambda\left(||\mW_{\mathrm{out}}||_F^2 + ||\mW_{\mathrm{in}}||_F^2\right) \\
        \text{\emph{s.t. }}& \quad \vz^{[i]} = \mW_{\mathrm{in}} \vs^{[i]} + \vb_{\mathrm{in}}, \; \vs^{[i]} =  \mW_{\mathrm{out}} \vz^{[i]} + \vb_{\mathrm{out}}, \; \vz^{[i]} \geq 0,
        \end{aligned}
    \end{equation}
    where $i$ indexes samples of $\vs$.
    At the minima of this problem, each row of $\mW_{\mathrm{in}}$ has at most one non-zero entry, i.e. the representation modularises, iff the following inequality is satisfied for all $\vw \in \mathbb{R}^{d_s}$:
    \begin{equation}\label{eq:linear_ae_inequality_app}
         \left(\min_i [\vw^\top \vs^{[i]}]\right)^2 + \sum^{d_s}_{j,j'\neq j}w_{j}w_{j'}\left\langle \bar{s}^{[i]}_j \bar{s}^{[i]}_{j'} \right\rangle_i 
         >
         \sum_{j=1}^{d_s} \left(w_j \min_i\bar{s}^{[i]}_j\right)^2,
    \end{equation}
    where $\bar{\vs} := \vs -\left\langle \vs^{[i]}\right\rangle_{i}$ and assuming that 
    $\left|\min_i\bar{s}^{[i]}_j\right| \leq \max_i\bar{s}^{[i]}_j \; \forall j \in [d_s]$ w.l.o.g.
\end{theorem}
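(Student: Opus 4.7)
The plan is to eliminate the weights and biases analytically to reduce the problem to an optimization over the rows $\vw_j$ of $\mW_{\mathrm{in}}$; compute the optimal pure-modular solution in that reduced problem; carry out a second-order perturbation about it that rescales the modular neurons and appends one new mixed neuron in direction $\vw$; and finally promote local optimality to global optimality via a convex reformulation. First I would observe that for any $\mW_{\mathrm{in}}$ of full column rank, the perfect-reconstruction constraint combined with minimum weight energy forces $\mW_{\mathrm{out}} = (\mW_{\mathrm{in}}^\top\mW_{\mathrm{in}})^{-1}\mW_{\mathrm{in}}^\top$, so that $\|\mW_{\mathrm{out}}\|_F^2 = \Tr[\mG^{-1}]$ with $\mG := \mW_{\mathrm{in}}^\top\mW_{\mathrm{in}}$. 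The energy-efficient biases respecting nonnegativity satisfy $b_{\mathrm{in},j} = -\min_i (\mW_{\mathrm{in}}\vs^{[i]})_j$, and after centering the sources to zero mean (WLOG), the full loss collapses to
\begin{equation*}
  \mathcal{L}(\mW_{\mathrm{in}}) = \sum_{j=1}^{d_z}\left[\vw_j^\top \Sigma \vw_j + \left(\min_i \vw_j^\top\vs^{[i]}\right)^{2}\right] + \lambda\left(\Tr[\mG] + \Tr[\mG^{-1}]\right),
\end{equation*}
where $\Sigma := \langle \vs^{[i]}(\vs^{[i]})^\top\rangle_i$.

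Restricting to modular candidates $\vw_j = d_j \ve_{\sigma(j)}$ and optimizing the scalars $d_j$ yields $d_j^{*4} = \lambda/(\sigma_j^2 + m_j^2 + \lambda)$ with $\sigma_j^2 := \langle s_j^2\rangle$ and $m_j := \min_i s_j^{[i]}$. To test whether this candidate is a local minimum of the full problem, I would perturb by setting $d_j^* \to d_j^* - \delta_j$ and appending a single new neuron with row $\vw^\top$, so that $\mG$ becomes $\mathrm{diag}((d_j^*-\delta_j)^2) + \vw\vw^\top$. Expanding $\mG^{-1}$ by Sherman--Morrison and Taylor-expanding the loss to second order, the $\delta_j$ first-order terms vanish by optimality of $d_j^*$, the pure $\delta_j^2$ terms are manifestly positive (so $\delta_j = 0$ is forced), and the remaining $\vw$-dependent second-order contribution equals
\begin{equation*}
  \vw^\top \Sigma \vw + \left(\min_i \vw^\top\vs^{[i]}\right)^2 + \lambda\|\vw\|^2 - \lambda\, \vw^\top \mathrm{diag}(d_j^{*-4})\vw.
\end{equation*}
Substituting $\lambda d_j^{*-4} = \sigma_j^2 + m_j^2 + \lambda$ and decomposing $\vw^\top \Sigma \vw$ into its diagonal and off-diagonal parts, the $\lambda$- and $\sigma_j^2$-contributions cancel, leaving exactly \cref{eq:linear_ae_inequality_app}. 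Strict positivity for every $\vw\neq 0$ is therefore necessary and sufficient for the modular candidate to be a local minimum, and whenever any inequality fails the corresponding $\vw$ is an explicit descent direction.

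To promote local to global optimality I would reparameterize the problem using the dataset-side Gram matrix $\mQ := \mZ^\top \mZ$. The feasible set is the intersection of the completely positive cone with the affine subspaces imposed by the encoder and decoder constraints, hence convex; the activity loss and $\|\mW_{\mathrm{in}}\|_F^2$ term are linear in $\mQ$, and $\|\mW_{\mathrm{out}}\|_F^2$ can be rewritten as $\Tr[\mC\,\mQ^\dagger]$ for a data-derived positive semidefinite $\mC$, which is convex on the feasible set because all feasible $\mQ$'s share a common column span (the image of the affine lift of the sources). Strict convexity then makes the modular local minimum the unique global minimum up to symmetries that preserve nonnegativity --- permutations and rescalings of neurons within each modular group --- so modularity is a global property of the optimum.

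The main obstacle will be the convex-reformulation step, and specifically showing that $\mQ \mapsto \Tr[\mC\,\mQ^\dagger]$ is genuinely convex on the feasible set: this requires verifying the shared-column-span property, which in turn requires careful bookkeeping of the overparameterization $d_z > d_s$ and the full-column-rank condition on $\mW_{\mathrm{in}}$. A secondary subtlety is the WLOG assumption $|\min_i\bar s^{[i]}_j| \leq \max_i\bar s^{[i]}_j$: removing it should amount to a coordinated sign flip $s_j \to -s_j$ absorbed into the biases, but I would verify that this reduction commutes with the modularity conclusion derived above.
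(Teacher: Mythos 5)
Your proposal is essentially correct, but it takes a genuinely different route from the proof the paper actually presents in \cref{app:simpler_maths_version}. The paper never perturbs the optimal modular solution: it first shows, via the pseudoinverse decomposition $\mU_j = \vu_j/|\vu_j|^2 + \vu_{j,\perp}$, that for \emph{any} fixed encoding magnitudes the weight loss is weakly minimised by orthogonal (hence modular) encodings; it then compares the activity loss of an arbitrary mixed neuron \emph{exactly} against the modular neurons obtained by splitting it while preserving each coefficient $u_{nj}$, so that \cref{eq:linear_ae_inequality_app} appears as a finite energy difference rather than as the Hessian quadratic form of a second-order expansion. Sufficiency then follows from a global comparison at every fixed encoding size with no convexity machinery, and necessity from the same infinitesimal-mixed-neuron argument you use. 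Your route --- optimise the modular gains $d_j^{*}$, apply Sherman--Morrison to the perturbed Gram matrix $\mG$, characterise local optimality, then lift to global optimality via the convex reformulation in $\mQ=\mZ^\top\mZ$ --- closely mirrors the authors' alternative derivation for the more general linearly mixed setting $\vx^{[i]}=\mA\vs^{[i]}$, and that is what it buys: it extends beyond orthogonally presented sources and makes explicit why $\lambda$ cancels from the final condition. The cost is that the two steps you flag as obstacles are genuinely load-bearing and are only asserted in your write-up: (i) convexity of $\mQ\mapsto\Tr[\mC\,\mQ^{\dagger}]$ holds only on families of PSD matrices with a common column span (a Nordstr\"om-type result), so you must verify that all feasible $\mQ$ share the span of the affinely lifted sources; and (ii) strict convexity gives a unique optimal $\mQ$, but to conclude that \emph{every} minimiser is modular you must rule out nonnegativity-preserving orthogonal mixings of the optimal representation, which requires an extra argument (e.g., that the critical ellipse of \cref{thm:ellipse_equivalence} touches every bounding hyperplane of the positive orthant). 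Neither gap is fatal, but both must be filled for the ``iff at the minima'' claim; the paper's more elementary fixed-encoding-size comparison sidesteps both.
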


\subsubsection{Weight Losses are Minimised by Modularising}

First, we show a lemma: that, for a fixed encoding size, the weight loss is minimised by orthogonalising the encoding of each source. Writing the representation as an affine function of the de-meaned sources:
\begin{equation}
    \vz^{[i]} = \sum_{j=1}^{d_s} \vu_j s_j^{[i]}+ \vb  = \sum_{j=1}^{d_s} \vu_j (s_j^{[i]} - \langle s_j^{[k]}\rangle_k)+ \vb' = \sum_{j=1}^{d_s} \vu_j \bar{s}_j^{[i]} + \vb'
\end{equation}
then by fixed encoding size we mean that each $|\vu_j|$ is fixed. We express things in terms of mean-zero variables $\bar{s}_j^{[i]}$ since it is easier and we can freely shift by a constant offset.

\begin{theorem}[Modularising Minimises Weight Loss]\label{thm:weight_loss_minimised_by_modularising} The weight loss, $||\mW_{\mathrm{out}}||_F^2 + ||\mW_{\mathrm{in}}||_F^2$, is minimised, for fixed encoding magnitudes, $|\vu_j|$, by modularising the representation. 

\end{theorem}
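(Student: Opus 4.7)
The plan is to decouple the two weight terms and then reduce to a classical matrix inequality. Since the columns of $\mW_{\mathrm{in}}$ are exactly the encoding vectors $\vu_j$, the first term is $||\mW_{\mathrm{in}}||_F^2 = \sum_j |\vu_j|^2$, which the hypothesis fixes independently of the directions chosen for the $\vu_j$. So it suffices to minimise $||\mW_{\mathrm{out}}||_F^2$ over all valid decoders for a fixed encoder $\mW_{\mathrm{in}}$.

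Next I would pin down the optimal decoder using the perfect-reconstruction constraint. Assuming the demeaned samples $\{\bar{\vs}^{[i]}\}$ span $\mathbb{R}^{d_s}$ (otherwise restrict to their span), perfect linear autoencoding forces $\mW_{\mathrm{out}}\vu_j = \ve_j$ for each $j$; equivalently, $\mW_{\mathrm{out}}$ must be a left inverse of $\mW_{\mathrm{in}}$. A short orthogonality calculation (any left inverse differs from the pseudoinverse by an element of the left nullspace, and the Frobenius cross term vanishes) shows that the minimum-Frobenius-norm left inverse is the Moore--Penrose pseudoinverse $\mW_{\mathrm{in}}^{\dagger} = (\mW_{\mathrm{in}}^{\top}\mW_{\mathrm{in}})^{-1}\mW_{\mathrm{in}}^{\top}$, yielding $\min||\mW_{\mathrm{out}}||_F^2 = \Tr\bigl[(\mW_{\mathrm{in}}^{\top}\mW_{\mathrm{in}})^{-1}\bigr] = \Tr[\mG^{-1}]$, where $\mG$ is the Gram matrix of the $\vu_j$ with fixed diagonal $G_{jj} = |\vu_j|^2$.

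The only substantive step is then to show that, among positive-definite $\mG$ with a prescribed diagonal, $\Tr[\mG^{-1}]$ is minimised exactly when $\mG$ is diagonal. A one-line Cauchy--Schwarz argument suffices: $1 = (\ve_j^{\top}\mG^{1/2}\mG^{-1/2}\ve_j)^2 \leq (\ve_j^{\top}\mG\ve_j)(\ve_j^{\top}\mG^{-1}\ve_j) = G_{jj}(\mG^{-1})_{jj}$, with equality iff $\ve_j$ is an eigenvector of $\mG$. Summing over $j$ gives $\Tr[\mG^{-1}] \geq \sum_j 1/|\vu_j|^2$, attained iff every standard basis vector is an eigenvector of $\mG$, i.e.\ iff the $\vu_j$ are pairwise orthogonal. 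A modular encoder, whose columns $\vu_j$ have disjoint supports in $\mathbb{R}^{d_z}$, is trivially orthogonal and hence attains the minimum weight loss.

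The only real obstacle is bookkeeping for degenerate cases --- a rank-deficient data matrix, or a singular $\mG$ arising from $\vu_j$ that are linearly dependent --- which I would dispatch by noting that linear dependence of the $\vu_j$ makes perfect reconstruction infeasible (so such encoders are excluded from the feasible set), while restricting to the span of $\{\bar{\vs}^{[i]}\}$ takes care of rank-deficient data since weight placed in directions unused by the data can only increase $||\mW_{\mathrm{out}}||_F^2$. Note the lemma asserts that modularising \emph{achieves} the minimum, not that it is the unique minimiser: any orthogonal encoding attains the same weight cost, and the authors' activity-loss analysis is what subsequently selects the modular solution.
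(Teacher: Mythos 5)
Your proposal is correct and follows essentially the same route as the paper: fix $\|\mW_{\mathrm{in}}\|_F^2=\sum_j|\vu_j|^2$, identify the optimal decoder as the Moore--Penrose pseudoinverse, and show its Frobenius norm is bounded below by $\sum_j 1/|\vu_j|^2$ with equality exactly when the encoding vectors are orthogonal, which a modular code achieves. The only cosmetic difference is the final inequality: the paper decomposes each dual vector as $\mU_j=\vu_j/|\vu_j|^2+\vu_{j,\perp}$ and drops the orthogonal part, whereas you obtain the same bound via Cauchy--Schwarz on the Gram matrix, $G_{jj}(\mG^{-1})_{jj}\geq 1$; these are equivalent arguments.
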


\begin{proof}
    First, we know that $\mW_{\text{in}}$ must be of the following form:
    \begin{equation}
        \mW_{\text{in}} = \begin{bmatrix}
            \vu_1 & \hdots & \vu_{d_s}
        \end{bmatrix}
    \end{equation}
    And therefore for a fixed encoding size the input weight loss is constant:
    \begin{equation}
        ||\mW_{\text{in}}||_F^2 = \Tr[\mW_{\text{in}}^T\mW_{\text{in}}] = \sum_{j=1}^{d_s} |\vu_j|^2
    \end{equation}
    So let's study the output weights, these are defined by the way they map the representation back to the sources:
    \begin{equation}
        \mW_{\text{out}}\cdot \vz^{[i]} + \vb_{\text{out}} = \begin{bmatrix}
            s_1^{[i]} \\ \vdots \\ s_{d_s}^{[i]}
        \end{bmatrix}
    \end{equation}

The min-norm $\mW_{\text{out}}$ with this property is the Moore-Penrose Pseudoinverse, i.e. the matrix:
\begin{equation}
    \mW_{\text{out}} = \begin{bmatrix}
        \mU_1^T\\ \vdots \\ \mU_{d_s}^T
    \end{bmatrix}
\end{equation}
Where each pseudoinverse $\mU_i$ is defined by $\mU_i^T\vu_j = \delta_{ij}$ and lying completely within the span of the encoding vectors $\{\vu_j\}$. We can calculate the norm of this matrix:
\begin{equation}
    |\mW_{\text{out}}|_F^2 = \Tr[\mW_{\text{out}}\mW_{\text{out}}^T] = \sum_{j=1}^{d_s} |\mU_j|^2
\end{equation}
Now, each of these capitalised pseudoinverse vectors must have some component along its corresponding lower case vector, and some component orthogonal to that:
\begin{equation}
    \mU_j = \frac{1}{|\vu_j|^2}\vu_j + \vu_{j,\perp}
\end{equation}
We've chosen the size of the component along $\vu_j$ such that $\mU_j^T\vu_j = 1$, and the $\vu_{j,\perp}$ is chosen so that $\mU_j^T\vu_k = \delta_{jk}$. Now, for a fixed size of $|\vu_j|$, this sets a lower bound on the size of the weight matrix:
\begin{equation}
    |\mW_{\text{out}}|_F^2 = \sum_{j=1}^{d_s} \frac{1}{|\vu_j|^2} + |\vu_{j,\perp}|^2 \geq  \sum_{j=1}^{d_s} \frac{1}{|\vu_j|^2}
\end{equation}
And this lower bound is achieved whenever the $\{\vu_j\}_{j=1}^{d_s}$ vectors are orthogonal to one another, since then $\vu_{j,\perp} = 0$. Therefore, we see that, for a fixed size of encoding, the weight loss is minimised when the encoding vectors are orthogonal, and that is achieved when the code is modular. 
\end{proof}

\subsubsection{Activity Loss}

Now we will turn to studying the activity loss, and find conditions under which a modular representation is better than a mixed one. We compare two representations, a mixed neuron:
\begin{equation}
    z_n^{[i]} = \sum_{j=1}^{d_s} u_{nj}\bar{s}_j^{[i]} + \Delta_n = \sum_{j=1}^{d_s} u_{nj}\bar{s}_j^{[i]} - \min_i[\sum_{j=1}^{d_s} u_{nj}\bar{s}_j^{[i]}]
\end{equation}
And another representation in which we break this neuron apart into its modular form, preserving the encoding size of each source:
\begin{equation}
    \vz_n^{[i]} = \begin{bmatrix}u_{n1}\bar{s}_1^{[i]} \\ \vdots \\ u_{nd_{s}}\bar{s}_{d_s}^{[i]} \end{bmatrix} - \begin{bmatrix}|u_{n1}|\min_j \bar{s}_{1}^{[j]} \\ \vdots \\ |u_{nd_{s}}|\min_j\bar{s}_{d_s}^{[i]}\end{bmatrix}
\end{equation}
The activity loss difference between the modular and mixed representations is:
\begin{equation}
     \sum_{j}u_{nj}^2 (\min_i\bar{s}_{j}^{[i]})^2 - (\min_i[\sum_{j=1}^{d_s} u_{nj}\bar{s}_j^{[i]}])^2 - \sum_{j=1,k\neq j}^{d_s} u_{nj}u_{nk}\langle\bar{s}_j^{[i]}\bar{s}_k^{[i]}\rangle_{i}
\end{equation}
So the modular loss is smaller if, for all $\vu_n \in\mathbb{R}^{d_s}$:
\begin{equation}\label{eq:key_inequality}
    (\min_i[\sum_{j=1}^{d_s} u_{nj}\bar{s}_j^{[i]}])^2
    > \sum_ju_{nj}^2(\min_i\bar{s}_{j}^{[i]})^2 - \sum_{j,k\neq j}u_{nj}u_{nk}\langle\bar{s}_j^{[i]}\bar{s}_k^{[i]}\rangle_{i}
\end{equation}
Further, we can see that if this equation holds for one vector, $\vu$, it also holds for any scaled version of that vector, therefore we can equivalently require the inequality to be true only for unit norm vectors.

\subsubsection{Combination of Weight and Activity}

So, if these inequalities are satisfied you can always decrease the loss by modularising a mixed neuron. Therefore, both the weight and activity loss are minimised at fixed encoding size by modularising, and this holds for all encoding sizes, therefore the optimal solution must be modular. Hence these conditions are sufficient for modular solutions to be optimal.

Conversely, if one of the inequalities is broken then the activity loss is not minimised at the modular solution. We can move towards the optimal mixed solution by creating a new mixed encoding neuron with infinitesimally low activity, and mixed in proportions given by the broken inequality. This neuron can be created by taking small pieces of the modular encoding of each source, in proportion to the mixing. Doing this will decrease the activity loss. Further, the weight loss is minimised at the modular solution, so, to first order, moving away from modular solution doesn't change it. Therefore, for each modular representation, we have found a mixed solution that is better, so the optimal solution is not modular.

\newpage

\subsection{Equivalence of Convex Hull Formulation}\label{app:ellipse_equivalence}

In this section we prove \cref{thm:ellipse_equivalence}, an equivalent formulation of the infinitely many inequalities in \cref{thm:linear_autoencoding_app} in terms of the convex hull, restated below:

\begin{theorem}\label{thm:linearly_mixed_sources_convex_hull_full} 
    In the setting as \cref{thm:linear_autoencoding_app}, define the following matrix:
    \begin{equation}\label{eq:mixed_S_defn}
            \mF = \begin{bmatrix}
        (\min_i s_1^{[i]})^2 & -\langle s_1^{[i]} s_2^{[i]}\rangle_i & -\langle s_1^{[i]}s_3^{[i]}\rangle_i & \hdots \\
        -\langle s_1^{[i]} s_2^{[i]}\rangle_i & (\min_i s_2^{[i]})^2 & -\langle s_2^{[i]}s_3^{[i]}\rangle_i & \hdots \\
        -\langle s_1^{[i]} s_3^{[i]}\rangle_i & -\langle s_2^{[i]}s_3^{[i]}\rangle_i & (\min_i s_3^{[i]})^2  & \hdots \\
        \vdots & \vdots & \vdots & \ddots
    \end{bmatrix}
    \end{equation}
    Define the set $E = \{\vy: \vy^T\mF^{-1}\vy = 1\}$. Then an equivalent statement of the modularisation inequalities \cref{eq:linear_ae_inequality_app}, is the representation modularises iff $E$ lies within the convex hull of the datapoints.  
\end{theorem}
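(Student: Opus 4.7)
The plan is to recast both sides of \cref{thm:linear_autoencoding_app}'s family of inequalities as \emph{support functions} of convex sets, and then invoke the standard fact that one closed convex set contains another iff its support function dominates the other's everywhere. First, using the definition of $\mF$, I would rewrite the inequality \cref{eq:linear_ae_inequality_app} in the compact matrix form
\begin{equation}
    \left(\min_i \vw^\top \bar{\vs}^{[i]}\right)^{\!2} \;\geq\; \vw^\top \mF \vw \qquad \forall \vw \in \mathbb{R}^{d_s}.
\end{equation}
Since this is required for \emph{all} $\vw$, I can also apply it to $-\vw$; noting that $\min_i(-\vw)^\top \bar{\vs}^{[i]} = -\max_i \vw^\top \bar{\vs}^{[i]}$ and that the RHS is symmetric under $\vw\mapsto -\vw$, the family is equivalent to $(\max_i \vw^\top \bar{\vs}^{[i]})^2 \geq \vw^\top \mF \vw$ for all $\vw$. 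Because the data is mean-zero, $\mathbf{0}\in\operatorname{conv}\{\bar{\vs}^{[i]}\}$, so $\max_i \vw^\top \bar{\vs}^{[i]} \geq 0$ always, and (assuming $\mF\succeq 0$) I can take square roots to obtain
\begin{equation}
    \max_i \vw^\top \bar{\vs}^{[i]} \;\geq\; \sqrt{\vw^\top \mF \vw} \qquad \forall \vw.
\end{equation}

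Next I would identify both sides. The left-hand side is exactly the support function $h_C(\vw)$ of $C := \operatorname{conv}\{\bar{\vs}^{[i]}\}$. The right-hand side is the support function of the solid ellipsoid $\bar{E}:=\{\vy:\vy^\top \mF^{-1} \vy \leq 1\}$: a one-line Lagrange computation maximising $\vw^\top \vy$ subject to $\vy^\top \mF^{-1}\vy = 1$ gives optimiser $\vy^* = \mF \vw / \sqrt{\vw^\top\mF\vw}$ with value $\sqrt{\vw^\top \mF \vw}$. Since $\bar{E}$ is the convex hull of its boundary $E$ (when $\mF\succ 0$), $E\subseteq C$ iff $\bar{E}\subseteq C$, so the containment statement of \cref{thm:linearly_mixed_sources_convex_hull_full} becomes $h_{\bar{E}}(\vw)\leq h_C(\vw)$ for all $\vw$. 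Invoking the standard result that $B\subseteq A$ iff $h_B \leq h_A$ everywhere (for closed convex $A,B$) closes the equivalence in the positive definite case.

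Finally I would handle the degenerate cases. If $\mF$ is positive semidefinite but singular, $\bar{E}$ collapses to a lower-dimensional ellipsoid, which is still a closed bounded convex set, and the argument carries over verbatim with the support function still equal to $\sqrt{\vw^\top \mF \vw}$ (with the null-space directions giving $0$ on both sides). If $\mF$ has some negative eigenvalue, then $E$ is a non-convex unbounded quadric (a hyperboloid), which cannot be contained in the convex hull of finitely many points; one must then separately verify that the original inequality also fails, and I would do so by exhibiting a direction $\vw$ along an eigenvector of a positive eigenvalue of $\mF$ large enough that $(\min_i \vw^\top \bar{\vs}^{[i]})^2$ (bounded in terms of the data's extent along $\vw$) is necessarily smaller than $\vw^\top \mF \vw$, using the tension between the off-diagonal $-\langle \bar{s}_j \bar{s}_{j'}\rangle$ entries (Cauchy--Schwarz bounded by variances) and the diagonal $(\min_i \bar s_j)^2$ entries.

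\paragraph{Main obstacle.} The clean part of the proof is just the support function dictionary: write both sides as $h_C$ and $h_{\bar{E}}$ and apply the classical inclusion criterion. The genuine subtlety is the indefinite-$\mF$ case, where the support-function machinery does not apply to the non-convex set $E$ directly; here one has to argue by hand that \emph{both} statements fail, rather than that the two sides of the equivalence remain in correspondence, and getting a clean direction $\vw$ witnessing the failure of the inequality seems to be the most delicate part of the argument.
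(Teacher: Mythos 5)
Your positive–definite case is essentially the paper's own argument in cleaner packaging: the paper proves the easy direction ($E$ inside the hull $\Rightarrow$ inequalities) by writing the Lagrange maximiser $\mF\vw/\sqrt{\vw^\top\mF\vw}$ as a convex combination of datapoints, and the hard direction via the facet inequalities $\mB\vx\leq\vb$ of the polytope of extremal datapoints --- which is exactly the separating-hyperplane content of the support-function criterion $h_{\bar E}\leq h_C \Leftrightarrow \bar E\subseteq C$ that you invoke. Your observation that $\mathbf{0}\in\operatorname{conv}\{\bar\vs^{[i]}\}$ (so the maxima are nonnegative and squaring is reversible) is a point the paper glosses over, and is needed in its necessity step too.

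The genuine gap is in your indefinite case, and it is not fillable: the step ``separately verify that the original inequality also fails'' cannot be carried out because the claim is false. Take $d_s=2$ and twelve samples consisting of $3$ copies of $(2,2)$, $7$ copies of $(-1,-1)$, and one each of $(2,-1)$ and $(-1,2)$. The sources are mean-zero, $\left|\min_i \bar s_j^{[i]}\right|=1\leq 2=\max_i\bar s_j^{[i]}$, and all four corners of $[-1,2]^2$ occur, so the sources are extreme-point independent; by \cref{thm:range_ind_corollary} the optimal representation modularises, and indeed every off-axis inequality of \cref{eq:linear_ae_inequality_app} holds strictly (e.g.\ for $w_1,w_2>0$ the bias term contributes $(w_1+w_2)^2$ against $w_1^2+w_2^2-\tfrac52 w_1w_2$; for $w_1>0>w_2=-v$ it contributes $(w_1+2v)^2$ against $w_1^2+v^2+\tfrac52 w_1 v$). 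Yet $\langle \bar s_1^{[i]}\bar s_2^{[i]}\rangle_i = 15/12 = 5/4 > 1 = \left(\min_i\bar s_1^{[i]}\right)\left(\min_i\bar s_2^{[i]}\right)$, so
\begin{equation}
    \mF=\begin{pmatrix}1 & -5/4\\ -5/4 & 1\end{pmatrix}
\end{equation}
has eigenvalues $9/4$ and $-1/4$: $E$ is an unbounded hyperbola that no convex hull of finitely many points can contain. So for indefinite $\mF$ the ``iff'' genuinely breaks --- the inequalities only control $h_C(\vw)^2\geq\vw^\top\mF\vw$, which coincides with set containment only when the quadric is an ellipsoid. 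You should restrict the equivalence to $\mF\succeq 0$ (or replace the containment condition by the support-function inequality itself). Be aware that the paper's own proof shares this hole: its necessity argument silently treats $E$ as a bounded ellipse, and the indefinite case is dispatched only by the unproven (and, per the example above, incorrect) remark that indefinite $\mF$ forces mixing. Do not spend effort hunting for the witnessing direction $\vw$ along a positive eigenvector --- in general it does not exist.
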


Notice that the matrix $\mF$ is not guaranteed to be positive definite, meaning $E$ is not necessarily an ellipse. In these cases no convex hull can contain the unbounded quadric it forms, and the optimal solution is mixed regardless of the particular distribution of the datapoints.

We'll begin by rewriting the inequalities, then we'll show each direction of the equivalence.

\subsubsection{Rewrite Inequality Constraints}

The current form of the inequality is that, $\forall \vw \in \mathbb{R}^n$:
\begin{equation}
        (\min_i \vw^T\vs^{[i]})^2 + \sum_{j,j'\neq j}^{d_s} w_j w_{j'}\langle s_j^{[i]}s_{j'}^{[i]}\rangle_i \geq \sum_j w_j^2 (\min_i s_j^{[i]})^2
\end{equation}
If this equation holds for one vector, $\vw$, it also holds for all positively scaled versions of the vector, $\vv = \lambda\vw$ for $\lambda > 0$, so we can equivalently state that this condition must only hold for unit vectors.

Construct the matrix $\mF$ as above \cref{eq:mixed_S_defn}, then this condition can be rewritten:
\begin{equation}\label{eqn:min_inequality}
    (\min_i \vw^T\vs^{[i]})^2 \geq \vw^T\mF\vw\quad \forall \vw\in\mathbb{R}^n, |\vw| = 1
\end{equation}
Further, notice that we can swap the minima in the inequality for a maxima. We can do this because if the inequality is satisfied for a particular value of $\vw$ then the following inequality is satisfied for $-\vw$:
\begin{equation}\label{eqn:max_inequality}
    (\max_i \vw^T\vs^{[i]})^2 \geq \vw^T\mF\vw \quad \forall \vw\in\mathbb{R}^n, |\vw| = 1
\end{equation}
Since \cref{eqn:min_inequality} is satisfied for all $\vw$, and all $\vw$ have their $-\vw$ unit vector pairs, so is \cref{eqn:max_inequality}. Similarly, if \cref{eqn:max_inequality} is satisfied for all unit vectors $\vw$, \cref{eqn:min_inequality} is satisfied but using $-\vw$. Hence the two are equivalent. 

\subsubsection{Sufficiency of Condition}

First, let's show the convex hull enclosing $E$ is a sufficient condition. Choose an arbitrary unit vector $\vw$. Choose the point within $E$ that maximises the dot product with $\vw$:
\begin{equation}
    \vy_w = \argmax_{\vy\in E}\vw^T\vy 
\end{equation}
We can find $\vy_w$ by lagrange optimisation under the constraint that $\vy^T\mF^{-1}\vy = 1$:
\begin{equation}\label{eq:lagrange_ellipse}
    \vy_w = \frac{\mF\vw}{\sqrt{\vw^T\mF\vw}}
\end{equation}
By the definition of $E$ and its enclosure in the convex hull of the points we can write:
\begin{equation}
    \vy_w = \sum_i \lambda_i \vs^{[i]} \quad \lambda_i > 0, \sum_i \lambda_i  \leq 1
\end{equation}
Now:
\begin{equation}
    \vw^T\vy = \sqrt{\vw^T\mF\vw} = \sum_i \lambda_i \vw^T\vs^{[i]} \leq \max_i \vw^T\vs^{[i]} \sum_i \lambda_i = \max_i \vw^T\vs^{[i]}
\end{equation}
Therefore, we get our our main result:
\begin{equation}
    (\max_i\vw^T\vs^{[i]})^2 \geq \vw^T\mF\vw
\end{equation}

\subsubsection{Necessity of Condition}

Now we will show the necessity of this condition. We will do this by showing that if one of the points in $E$ is not within the convex hull of the datapoints then we can find one of the inequalities in \cref{eqn:max_inequality} that is broken.

First, get all the datapoints that for some unit vector $\vw$ maximise $\vw^T\vs^{[i]}$ over the dataset. Construct the convex hull of these points, the polyhedron $P$. You can equivalently characterise any polyhedra by a set of linear inequalities: $P = \{\vx\in\mathbb{R} :\mB\vx\leq \vb\}$. Denote with $\mB_j$ the jth row of $\mB$. We can always choose to write the inequalities in such a way that the rows of $\mB$ are unit length, by appropriately rescaling each $b_j$. Assume we have done this.

\begin{figure}
    \centering
    \includegraphics[width=0.6\linewidth]{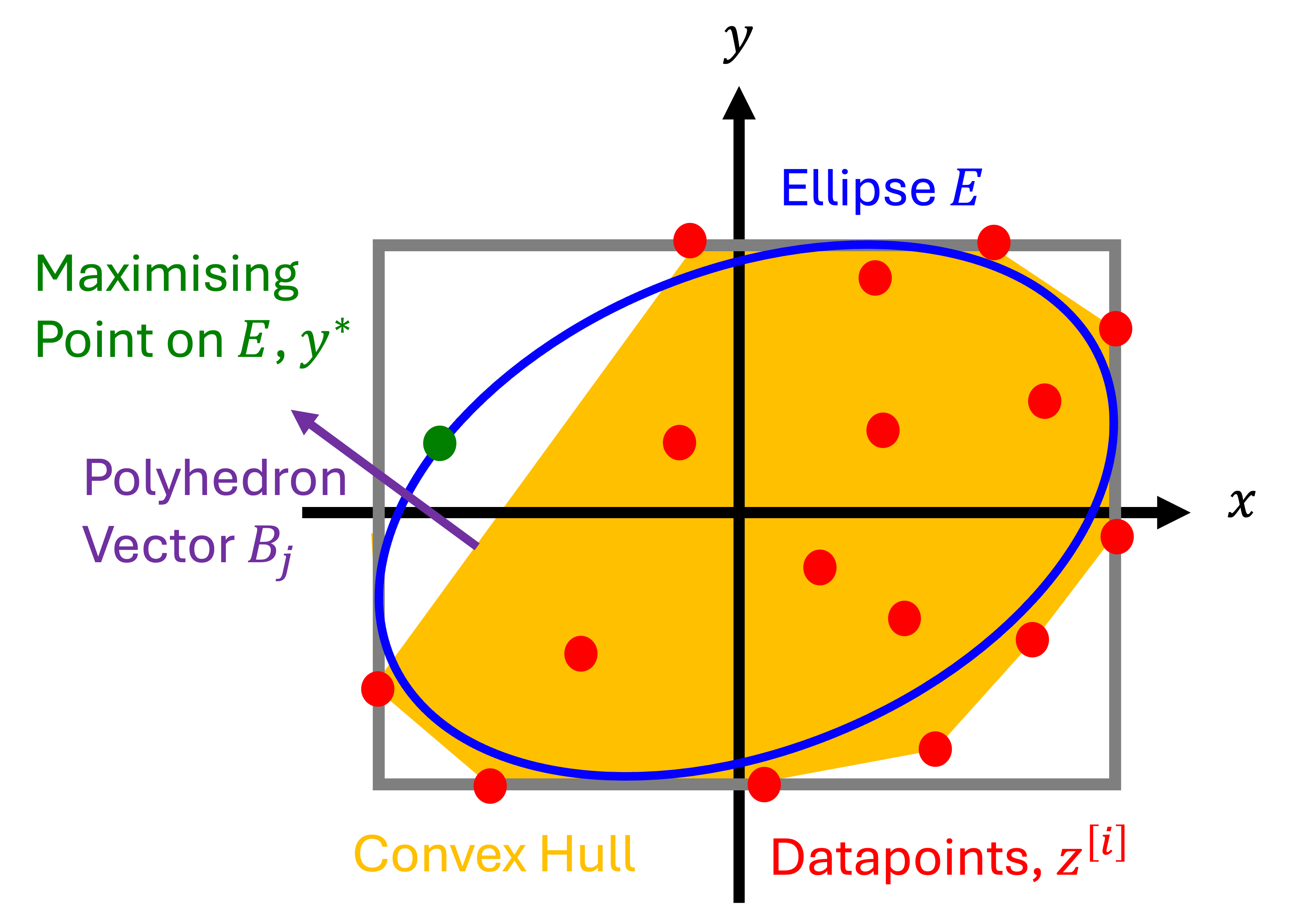}
    \caption{If the convex hull does not enclose the ellipse then you can always find a unit vector that breaks the inequality. In particular, this is true for the unit vector that defines the faces of the convex hull, $\mB_j$.}
    \label{fig:ellipse_proof_help}
\end{figure}

Assume the convex hull does not contain the entire ellipse. Therefore there is at least one point on the ellipse that breaks one of the inequalities defining the polyhedron. These points are charactersied by being both on the ellipse, $\vy^T\mF^{-1}\vy = 1$, but at least one of the inequalities defining the polyhedron is broken: $\mB_j^T\vy > b_j$. Let's call the set of points on the ellipse for which this inequality is broken $\mathcal{B}_j = \{\vy \in\mathbb{R}^n: \vy^T\mF^{-1}\vy = 1\quad \text{and} \quad \mB_j^T\vy > b_j\}$.

Now, choose the element of $\mathcal{B}_j$ that maximises $\mB_j^T\vy$, call it $\vy^*$. This will also be the element of $E$ that maximises $\mB_j^T\vy$, which by the previous lagrange optimisation, \cref{eq:lagrange_ellipse}, has the following dot product: $\mB_j^T\vy^* = \sqrt{\mB_j^T\mF\mB_j}$.

But now we derive a contradiction. This point is outside the polyhedron, $\mB_j^T\vy > b_j$, whereas all points inside the polyhedron, including all the datapoints, satisfy all inequalities, hence, $\mB_j^T\vs^{[i]} \leq b_j \forall \vs^{[i]}$. But, $\mB_j^T\vy^* = \sqrt{\mB_j^T\mF\mB_j} > b_j$, therefore:
\begin{equation}
    \mB_j^T\vs^{[i]} \leq b_j < \sqrt{\mB_j^T\mF\mB_j}\quad  \forall \vs^{[i]}
\end{equation}
And hence we've found a unit vector $\mB_i$ such that:
\begin{equation}
    \max_i \mB_j^T\vs^{[i]} < \sqrt{\mB_j^T\mF\mB_j} \quad \rightarrow \quad (\max_i \mB_j^T\vs^{[i]})^2 < \mB_j^T\mF\mB_j
\end{equation}
Hence one of the inequalities in $\cref{eqn:max_inequality}$ is broken, showing the necessity.

\newpage
\subsection{Range-Independent Variables}\label{app:range_independent}

We will now prove the corollary on range dependence,~\cref{thm:range_ind_corollary}, recapped here:

\begin{corollary}
    In the same setting as~\cref{thm:linear_autoencoding_app} the optimal representation modularises if all sources are pairwise extreme-point independent, i.e. if
    \begin{equation}
        \min_i \bigg[s_j^{[i]} \bigg| s^{[i]}_{j'} \in \left\{\max_{i'} s^{[i']}_{j'}, \min_{i'} s^{[i']}_{j'}\right\}\bigg] = \min_i s_j^{[i]}
    \end{equation}
    for all $j,j' \in [d_s]^2$.
\end{corollary}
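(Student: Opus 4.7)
The plan is to apply \cref{thm:linear_autoencoding_app} and verify its inequality \cref{eq:linear_ae_inequality_app} for every $\vw \in \mathbb{R}^{d_s}$. Write $m_j := -\min_i \bar s_j^{[i]} \geq 0$ and $M_j := \max_i \bar s_j^{[i]} \geq m_j$, using the theorem's WLOG assumption.

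For a fixed $\vw$, the idea is to exhibit a single datapoint $i^\star$ at which $\vw^\top \bar\vs^{[i^\star]}$ is as negative as possible. Pairwise extreme-point independence, read in the stronger support-based form noted just after the corollary (every extremal corner carries positive mass), guarantees a datapoint at which $\bar s_j^{[i^\star]} = -m_j$ whenever $w_j > 0$ and $\bar s_j^{[i^\star]} = M_j$ whenever $w_j < 0$, for all $j$ simultaneously. Writing $\tilde a_j \in \{m_j, M_j\}$ for the chosen extremum magnitude (always $\geq m_j$), one obtains $\vw^\top \bar\vs^{[i^\star]} = -\sum_j |w_j|\tilde a_j$, so
\begin{equation*}
\Bigl(\min_i \vw^\top \bar\vs^{[i]}\Bigr)^2 \;\geq\; \sum_j w_j^2 \tilde a_j^2 \;+\; \sum_{j \neq j'} |w_j w_{j'}|\, \tilde a_j \tilde a_{j'}.
\end{equation*}
Substituting into \cref{eq:linear_ae_inequality_app} and cancelling $\sum_j w_j^2 m_j^2$ reduces the target inequality to strict positivity of
\begin{equation*}
\sum_j w_j^2 (\tilde a_j^2 - m_j^2) \;+\; \sum_{j \neq j'}\Bigl[\,|w_j w_{j'}|\, \tilde a_j \tilde a_{j'} \;+\; w_j w_{j'}\, \langle \bar s_j^{[i]} \bar s_{j'}^{[i]}\rangle_i\Bigr].
\end{equation*}

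The first sum is already nonnegative since $\tilde a_j \geq m_j$, so the main obstacle is showing the cross-term sum is strictly positive. Plain Cauchy--Schwarz only yields $|\langle \bar s_j \bar s_{j'}\rangle| \leq M_j M_{j'}$, which is too weak. Instead, the two-point marginal structure forced by corner support, together with $\mathbb{E}[\bar s_j] = 0$ (which pins $\Pr[\bar s_j = M_j] = m_j/(m_j + M_j)$), yields Fr\'echet--Hoeffding-style bounds $-m_j m_{j'} < \langle \bar s_j \bar s_{j'}\rangle < \min(m_j M_{j'}, M_j m_{j'})$ via a short linear-program argument on the joint corner weights. A sign-case analysis on $w_j w_{j'}$ then pairs each covariance with the matching $\tilde a_j \tilde a_{j'}$ so that every bracketed term is strictly positive. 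An arguably cleaner packaging is to route through \cref{thm:linearly_mixed_sources_convex_hull_full}: one checks directly that the axis-aligned box $\prod_j [-m_j, M_j]$---contained in the convex hull by corner support---encloses the ellipse $E$, whose semi-width along $\ve_j$ is exactly $m_j$.
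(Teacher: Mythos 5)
Your proposal is correct in its main line but takes a different route from the paper, and one of your intermediate justifications rests on a false premise (though the claim it supports is true). The paper does not verify the inequality of \cref{thm:linear_autoencoding_app} term by term; instead it observes that under extreme-point independence the mixed neuron's bias splits, $-\min_i\sum_j u_{nj}\bar s_j^{[i]} = -\sum_j\min_i u_{nj}\bar s_j^{[i]}$, so the mixed neuron is a sum of $d_s$ nonnegative terms. Expanding the square of that sum, the cross terms are averages of products of nonnegative quantities and hence nonnegative, and each diagonal term is at least the corresponding modular neuron's energy (with equality when $u_{nj}>0$, and an excess when $u_{nj}<0$ because of the WLOG orientation $m_j\le M_j$). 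This is algebraically the same computation as yours: your bracketed cross term $|w_jw_{j'}|\tilde a_j\tilde a_{j'} + w_jw_{j'}\langle\bar s_j\bar s_{j'}\rangle$ is exactly the paper's cross term $\langle(u_{nj}\bar s_j - \min_i u_{nj}\bar s_j)(u_{nj'}\bar s_{j'} - \min_i u_{nj'}\bar s_{j'})\rangle$ expanded, since $-\min_i u_{nj}\bar s_j^{[i]} = |u_{nj}|\tilde a_j$. The paper's packaging gets the covariance bounds ``for free'' from nonnegativity of the product; yours requires establishing them separately. Both arguments rely on the same strengthened reading of the hypothesis (a datapoint at every sign-pattern corner, not merely the literal pairwise condition), which the paper also adopts without comment, so you are not worse off there.

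Two soft spots to fix. First, your derivation of the Fr\'echet--Hoeffding-style bounds invokes a ``two-point marginal structure forced by corner support'' pinning $\Pr[\bar s_j = M_j]$; corner support forces no such thing---the marginals may take interior values---so the linear program over corner weights is set up on a false premise. The bounds themselves are true and need no LP: for any mean-zero variables with $-m_j\le\bar s_j\le M_j$, one has $\langle(\bar s_j+m_j)(\bar s_{j'}+m_{j'})\rangle\ge 0$, giving $\langle\bar s_j\bar s_{j'}\rangle\ge -m_jm_{j'}$, and $\langle(\bar s_j+m_j)(M_{j'}-\bar s_{j'})\rangle\ge 0$, giving $\langle\bar s_j\bar s_{j'}\rangle\le m_jM_{j'}$; the presence of the relevant corner datapoint makes the inequality strict. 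Second, your ``cleaner packaging'' via \cref{thm:linearly_mixed_sources_convex_hull_full} is incomplete as stated: checking that the semi-width of $E$ along each $\ve_j$ is $m_j$ only controls the axis-aligned projections, and projection containment does not imply set containment. To show the box $\prod_j[-m_j,M_j]$ encloses $E$ you must compare support functions in every direction $\vw$, which reproduces exactly the computation of your main route, so nothing is saved.
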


\begin{proof}
If the variables are extreme-point independent then we can take the min inside the sum in the definition of the mixed neuron bias:
\begin{equation}
z_n^{[i]} = \sum_{j=1}^{d_s} u_{nj}\bar{s}_j^{[i]} + \Delta_n = \sum_{j=1}^{d_s} u_{nj}\bar{s}_j^{[i]} - \min_i[\sum_{j=1}^{d_s} u_{nj}\bar{s}_j^{[i]}] = \sum_{j=1}^{d_s}\bigg( u_{nj}\bar{s}_j^{[i]} - \min_i[u_{nj}\bar{s}_j^{[i]}]\bigg)
\end{equation}

Now we can compute the activity loss:
\begin{equation}
\begin{aligned}
    \langle (z_n^{[i]})^2\rangle_i =  \\
 \sum_{j=1}^{d_s}\langle\bigg( u_{nj}\bar{s}_j^{[i]} - \min_i[u_{nj}\bar{s}_j^{[i]}]\bigg)^2\rangle_i + \sum_{j,j'=1, j\neq j}^{d_s}\langle \bigg( u_{nj}\bar{s}_j^{[i]} - \min_i[u_{nj}\bar{s}_j^{[i]}]\bigg)\bigg( u_{nj}\bar{s}_{j'}^{[i]} - \min_i[u_{nj}\bar{s}_{j'}^{[i]}]\bigg)\rangle_i
\end{aligned}
\end{equation}
The cross terms at the end are strictly positive, and each of the squared terms on the left is the same or larger than its corresponding modular neuron, since either $u_{nj}$ is positive and it is the same, or $u_{nj}$ is negative and it is greater than or equal to the equivalent modular neuron. Therefore the mixed energy cost is larger than the modular energy cost for extreme-point independent variables, so the optimal representation should modularise.
\end{proof}
\newpage

\subsection{Extension to all Activity Norms}\label{app:other_activity_norms}

We now extend our theoretical results to other activity norms beyond L2. As shown in theorem~\cref{thm:weight_loss_minimised_by_modularising}, the weight loss is minimised by orthogonalising the coding of different variables, so we can focus on the activity loss. We will show that, for all norms, modular solutions are optimal if the variables are range independent. 

Consider a set of modular neurons:
\begin{equation}
    \vz_n^{[i]} = \begin{bmatrix}|u_{n1}|(\bar{s}_1^{[i]}- \min_j \bar{s}_{1}^{[j]})\\ \vdots \\ |u_{nd_{s}}|(\bar{s}_{d_s}^{[i]}-\min_j\bar{s}_{d_s}^{[i]}) \end{bmatrix}\geq {\bf 0}
\end{equation}
And let's see if we could usefully mix these encodings together for any set of mixing coefficients $\vu_n\in\mathbb{R}^{d_s}$:
\begin{equation}
    z_n^{[i]} = \sum_{j=1}^{d_s} u_{nj}\bar{s}_j^{[i]} + \Delta_n = \sum_{j=1}^{d_s} u_{nj}\bar{s}_j^{[i]} - \min_i[\sum_{j=1}^{d_s} u_{nj}\bar{s}_j^{[i]}] \geq 0
\end{equation}
If the variables are range-independent then we can move the min inside the sum:
\begin{equation}
    \Delta_n = - \min_i[\sum_{j=1}^{d_s} u_{nj}\bar{s}_j^{[i]}] = -\sum_{j=1}^{d_s} \min_iu_{nj}\bar{s}_j^{[i]}
\end{equation}
Then the Lp norm is:
\begin{equation}
    \langle |z_n^{[i]}|^p\rangle_i = \langle |\bigg(\sum_{j=1}^{d_s} u_{nj}\bar{s}_j^{[i]} + \Delta_n\bigg)|^p\rangle_i = \langle |\bigg(\sum_{j=1}^{d_s} u_{nj}\bar{s}_j^{[i]} -\min_iu_{nj}\bar{s}_j^{[i]}\bigg)|^p\rangle_i
\end{equation}
Now using the convexity of the Lp norm:
\begin{equation}
    \langle |\bigg(\sum_{j=1}^{d_s} u_{nj}\bar{s}_j^{[i]} -\min_iu_{nj}\bar{s}_j^{[i]}\bigg)|^p\rangle_i \geq \sum_{j=1}^{d_s} \langle |u_{nj}\bar{s}_j^{[i]} -\min_iu_{nj}\bar{s}_j^{[i]}|^p\rangle_i
\end{equation}
with equality only if p is equal to 1, or if at most one of the entries of $\vu$ are 0.

Finally, since we orient our variables such that $|\min_i \bar{s}^{[i]}_j| < \max_i \bar{s}^{[i]}$, without loss of generality,:
\begin{equation}
    \langle |u_{nj}\bar{s}_j^{[i]} -\min_iu_{nj}\bar{s}_j^{[i]}|^p\rangle_i \geq \langle |u_{nj}|^p\langle|\bar{s}_j^{[i]} -\min_i\bar{s}_j^{[i]}|^p\rangle_i
\end{equation}
So we get our final result that the mixed activity loss is greater than or equal to the modular activity loss if the variables are range independent:
\begin{equation}
    \langle |z_n^{[i]}|^p\rangle_i \geq \sum_{j=1}^{d_s}|u_{nj}|^p\langle|\bar{s}_j^{[i]} -\min_i\bar{s}_j^{[i]}|^p\rangle_i = \langle |\vz_n^{[i]}|^p\rangle_i
\end{equation}
So, as long as $p>1$, the activity loss is lowered by modularising the representation of range-independent variables. Further, since the weight loss is also minimised, the optimal solution will be modular.

Two caveats: first, in the particular case of the L1 norm all that we can guarantee is that the activity loss of a modular solution is equal to that of a mixed solution. This means that, even in the case of range-independent variables, there might be orthogonal but mixed solutions that are equally good as the modular solution. For $p>1$ this caveat is not needed.

Second, this is a set of sufficient conditions, range-independent variables are optimally encoded modularly. There will be other nearly range-independent variables that are also optimally encoded modularly, just as we found for the L2 case.
\newpage
\subsection{Extension to Multidimensional Variables}\label{app:multi_dimensional_vars}

Many variables in the real world, such as angles on a circle, are not 1-dimensional. Ideally our theory would tell us when such multi-dimensional sources were modularised. Unfortunately, precise necessary and sufficient conditions are not an easy step from our current theory because our proof strategies make extensive use of the optimal modular encoding. For one-dimensional sources this is very easy to find, simply align each source correctly in a neuron, and make the representation positive. Finding similar optimal modular solutions for multidimensional sources is much harder, potentially as hard as simply solving the optimisation problem. 

However, we can make progress on a simpler problem: showing that range-independence is a sufficient condition for a set of multi-dimensional sources to modularise. Consider two multidimensional sources $\vx^{[i]}\in\mathbb{R}^{d_x}$ and $\vy^{[i]}\in\mathbb{R}^{d_y}$, and an encoding in which they are mixed in single neurons:
\begin{equation}
    z_n^{[i]} = \sum_{j=1}^{d_x}u_{nj}x_j^{[i]} + \sum_{j=1}^{d_y}v_{nj}y_j^{[i]} + \Delta_n
\end{equation}
Whatever choice of coefficient vectors, $\vu$ and $\vv$, or activity norm (as long as $p>1$), we can consider breaking this mixed coding into two modular neurons. Define new variables $a^{[i]} = \sum_{j=1}^{d_x}u_{nj}x_j^{[i]}$ and $b^{[i]} = \sum_{j=1}^{d_y}v_{nj}y_j^{[i]}$. If the two multidimensional sources are range-independent, then so too are $a$ and $b$. Define the modular representation:
\begin{equation}
    \vz_n^{[i]} = \begin{bmatrix}
        a^{[i]} - \min_j a^{[j]}\\        b^{[i]} - \min_j b^{[j]}
    \end{bmatrix}
\end{equation}
All of our previous arguments apply to argue that this modular encoding will be preferred if the multidimensional variables are range-independent.

We just have to show that the weight loss is also reduced by modularising multidimensional sources, a small extension of \cref{thm:weight_loss_minimised_by_modularising}, which we shall now show. Write a modular code as:
\begin{equation}
    \vz^{[i]} =
    \mU \vx^{[i]} - \min_j [\mU \vx^{[j]}] +   \mV \vy^{[i]} - \min_j [\mV \vy^{[j]}]
\end{equation}
With the encoding of $\vx$ and $\vy$ spanning two different spaces: $\mU^T\mV = {\bf 0}$.

Then define the minimal L2 norm readout weight matrix using the pseudoinverse:
\begin{equation}
    \mW_{\text{out}} = \begin{bmatrix}
        \mU^{\dagger}\\\mV^{\dagger}
    \end{bmatrix}
\end{equation}
Such that $\mW_{\text{out}}\vz^{[i]} = \begin{bmatrix}
    \vx^{[i]} \\ \vy^{[i]} 
\end{bmatrix}$ up to some constant offset that can be removed by the bias term in the readout. Now consider mixing the representation of the two multidimensional sources using an orthogonal matrix that preserves the encoding size of each variable, $\mO$:
\begin{equation}
    \vz^{[i]} = \mU\vx^{[i]} + \mO\mV\vy^{[i]} \qquad \mW_{\text{out}} =     \mW_{\text{out}} = \begin{bmatrix}
        \mU^\dagger + \mA\\ \mV^\dagger\mO^T+ \mB
    \end{bmatrix}
\end{equation}
Then in the subspace spanned by the columns of $\mU$ $\mW_{\text{out}}$ must be the same as $\mU^{\dagger}$. However, in order to offset the newly aligned matrix $\mO\tilde{\mV}$ there must be an additional component orthogonal, $\mA$:
\begin{equation}
\begin{aligned}
    \mW_{out}\mU = \mU^{\dagger}\mU + \mA\mU = \mathbb{I} \qquad \mA\mU = {\bf 0} \\
    \mW_{out}\mO\mV = \mU^{\dagger}\mO\mV + \mA\mO\mV = {\bf 0} \qquad \mA\mO\mV = -\mU^{\dagger}\mO\mV
\end{aligned}
\end{equation}
So $\mA$ and $\mU^\dagger$ live in orthogonal subspaces, and $\mA \neq 0$. The same applies to $\mV$ and $\mB$, therefore:
\begin{equation}
    ||\mW_{out}||_F^2 = ||\mU^{\dagger}||_F^2 + ||\mV^{\dagger}||_F^2  + ||\mA||_F^2 + ||\mB||_F^2 \geq ||\mU^{\dagger}||_F^2 + ||\mV^{\dagger}||_F^2
\end{equation}
And hence orthogonalising the subspaces reduces the loss in this case as well.

Combining these two, the optimal representation of range-independent multi-dimensional variables for activity losses using an Lp norm for $p>1$ is modular.

\newpage
\section{Imperfect Reconstruction}\label{app:imperfect_recon}

In our results we assume perfect reconstruction. This is not the real situation in either biology or neural networks. If a dataset is range-independent, but only when considering a very unlikely point, then in reality that point might be dropped, incurring a small reconstruction cost, in order to save firing and weight energy. 

We illustrate this in the following example. Our dataset comprises four points from the corners of a square - the dataset is range independent. However, we sample these points very nonuniformly, one with probability $\Delta = 0.01$, the rest with probability $\Delta = \frac{1-\Delta}{3}$, \cref{fig:rebuttal_recon}A. If the penalty on reconstruction error is high enough during numerical optimisation, all four points are perfectly reconstructed, this dataset is range independent, and the representation modularises \cref{fig:rebuttal_recon}B. If the reconstruction loss is made less important the representation changes dramatically, and mixes \cref{fig:rebuttal_recon}C.

\begin{figure}[h!]
    \centering
    \includegraphics[width=\linewidth]{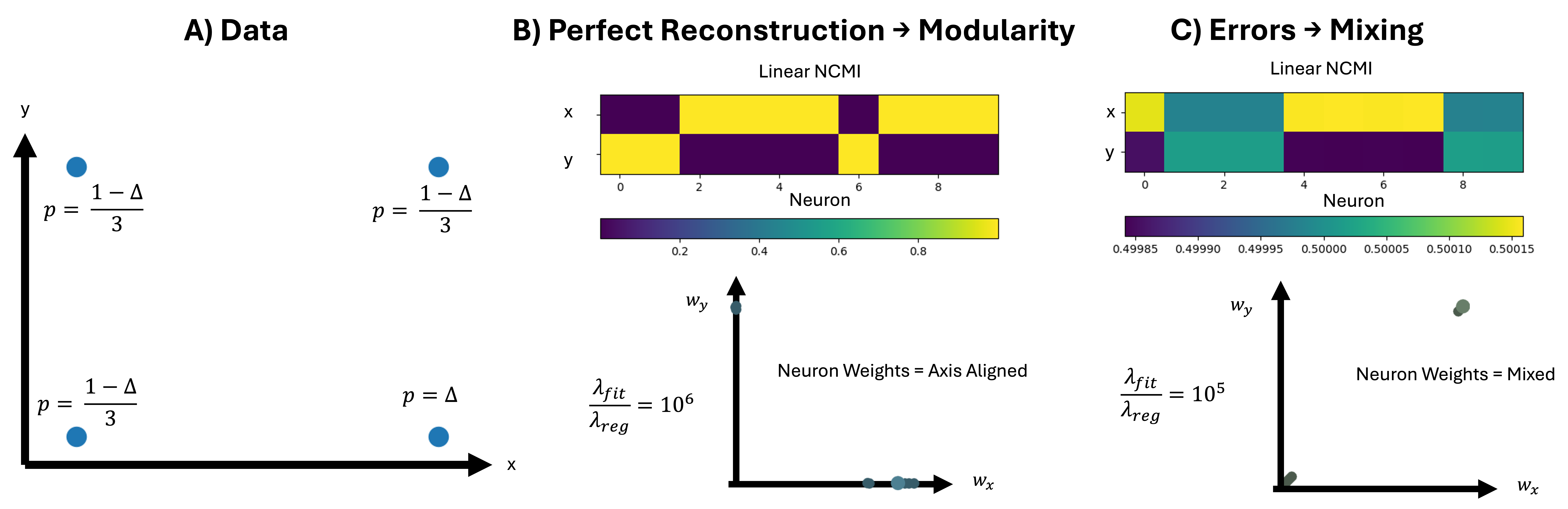}
    \caption{A) We sample a dataset of four points from the corners of a square, one with probability $\Delta$, the rest with probability $\frac{1-\Delta}{3}$. B) If the penalty on reconstruction error is high enough, the data fits perfectly, and the representation is modular. C) If the penalty is lowered, it is preferred to incur reconstruction cost, and mix the representation.}
    \label{fig:rebuttal_recon}
\end{figure}

This is an interesting and very relevant effect, that future work could usefully characterise.

\newpage

\section{Positive Linear Recurrent Neural Networks}\label{app:RNN_theory}

One of the big advantages of changing the phrasing of the modularising constraints from statistical properties (as in~\citet{whittington2022disentanglement}) to range properties is that it naturally generalises to recurrent dynamical formulations, and recurrent networks are often a much more natural setting for neuroscience. To illustrate this we'll study positive linear RNNs. Linear dynamical systems can only produce mixtures of decaying or growing sinusoids, so we therefore ask the RNN to produce different frequency outputs via an affine readout:
\begin{equation}
    \mW_{\text{out}} \vg(t) + \vb_{\text{out}} = \begin{bmatrix}
        \cos(\omega_1 t) \\ \cos(\omega_2 t)
    \end{bmatrix}
\end{equation}
Then we'll assume the internal dynamical structure is a standard linear RNN:
\begin{equation}
    \mW_{\text{rec}} \vg(t) + \vb = \vg(t+\Delta t)
\end{equation}
We will study this two frequency setting and ask when the representation learns to modularise the two frequencies. These results could be generalised to multiple variables as in section~\ref{app:multi_dimensional_vars}. Again, our representation must be non-negative, $\vg(t) \geq {\bf 0}$, and we minimise the following energy loss:
\begin{equation}\label{eq:recurrent_energy_loss}
    \mathcal{L} = \langle ||\vg(t)||^2\rangle + \lambda_W ||\mW_{\text{rec}}||_F^2 + \lambda_R ||\mW_{\text{out}}||_F^2
\end{equation}

We will show that, if $\lambda_W$ is sufficiently large and the RNN solves the task for infinite time, the optimal representation contains activity cycling at only the two frequencies, $\omega_1$ and $\omega_2$. Further we'll show that these two parts should modularise if the two frequencies are irrational ratios of one another, and should mix if one frequency is an integer multiple of the other, i.e. the frequencies are harmonics. Finally, we conjecture, show empircally, and present proof in limited cases, that non-harmonic rational frequency ratios should modularise. If the task runs for finite time then a smudge factor is allowed, the frequencies can be near integer multiples, and how close you have to be to mix is governed by the length of the task. As long as during the task the frequencies are `effectively' integer multiples, the representation will mix.

To show this we will first study the weight losses, show that they are minimised if each frequency is modularised from the others. We will then argue that when $\lambda_W$ is sufficiently large the representation will contain only two frequencies. Finally, we will study when the activity loss is minimised. Then, as in~\cref{app:simpler_maths_version}, when all losses agree on their minima, the representation modularises, else it mixes.

\subsection{Structure of Representation - Act I}

Due to the autonomous linear system architecture, the representation can only contain a linear mixture of growing, decaying, or stable frequencies, and a constant. The growing and decaying modes cannot help you solve the task (since if they do at one point in time, they later won't), and will either cost infinite energy (the growing modes) or will cost some energy without helping to reduce the bias (since, again, the bias has to be set large enough such that after the decaying mode has disappeared the representation is still positive). Hence,
\begin{equation}\label{eq:recurrent_representation}
    \vg(t) = \sum_{i=1}^F \va_i \cos(\omega_i t) + \vb_i \sin(\omega_i t) + \vb_0
\end{equation}
Where $F$ is smaller than half the number of neurons to ensure the autonomous linear system can propagate each frequency (each frequency `uses' 2 dimensions, as will be obvious from \cref{app:recurrent_loss_without_bias}.

\subsection{Readout Loss} 

The readout loss is relatively easy, again create some capitalised pseudoinverse vectors $\{\mA_i, \mB_i\}_{i=1}^F$ defined by being the min-norm vectors with the property that $\mA_i^T\va_j = \delta_{ij}$ and $\mA_i^T\vb_j = 0$, and the same for $\mB_i$. Then the min-norm readout matrix is:
\begin{equation}
        \mW_{\text{out}} = \begin{bmatrix}
        \mA_{1}^T \\\mA_{2}^T
    \end{bmatrix}
\end{equation}
And the readout loss is:
\begin{equation}
    |\mW_{\text{out}}|_F^2 = \Tr[\mW_{\text{out}}\mW_{\text{out}}^T] = |\mA_1|^2 + |\mA_2|^2
\end{equation}
Each vector has the following form, $\mA_i = \frac{1}{\va_i^2}\va_i + \va_{i,\perp}$, where $\va_{i,\perp}$ is orthogonal to $\va_i$ and is included to ensure the correct orthogonality properties hold. So, for a fixed encoding size, the readout loss is minimised if $\va_{i,\perp} = 0$. This occurs when the encodings are orthogonal (i.e. $\va_1$ and $\va_2$ are orthogonal from one another, all other $\va_i$ vectors, and from each of the $\vb_i$ vectors). This happens if the two frequencies are modularised from one another, and additionally the sine and cosine vectors for each frequency are orthogonal. I.e. a modularised solution with this property has the minimal readout weight loss for a given encoding size.

\subsection{Recurrent Loss without Bias} \label{app:recurrent_loss_without_bias}

First we consider the slightly easier case where there is no bias in the recurrent dynamics:
\begin{equation}
    \mW \vg(t) = \vg(t+1)
\end{equation}
Then we will write down a convenient decomposition of the min-norm $\mW$. Call the matrix of stacked coefficient vectors, $\mX$:
\begin{equation}
    \mX = \begin{bmatrix}
       \va_1 & \vb_1 & \hdots & \va_F & \vb_F & \vb_0
    \end{bmatrix}
\end{equation}
Similarly, call the matrix of stacked normalised psuedo-inverse vectors $\mX^\dagger$:
\begin{equation}
    \mX^\dagger = \begin{bmatrix}
        \mA_1 & \mB_1 & \hdots & \mA_F & \mB_F & \mB_0
    \end{bmatrix}
\end{equation}
And finally create an ideal rotation matrix:
\begin{equation}\label{eq:rotation_matrix}
\begin{split}
    \mR = \begin{bmatrix}
        \cos(\omega_1 \Delta t) & -\sin(\omega_1 \Delta t) & \hdots &  0 & 0 & 0\\
        \sin(\omega_1 \Delta t) & \cos(\omega_1 \Delta t) &  \hdots & 0 & 0 & 0\\
        \vdots & \vdots& \ddots & \vdots &\vdots & \vdots\\
        0 & 0 & \hdots & \cos(\omega_F \Delta t) & -\sin(\omega_F \Delta t) & 0 \\
        0 & 0 & \hdots & \sin(\omega_F \Delta t) & \cos(\omega_F \Delta t) & 0\\
        0 & 0 & \hdots & 0 & 0 & 1
    \end{bmatrix} \\
    = \begin{bmatrix}
        \mR_1 & 0 & \hdots &0 & 0\\
        0 & \mR_2 & \hdots &0 & 0\\
        \vdots & \vdots & \ddots & \vdots &  \vdots\\
        0 & 0 & \hdots & \mR_F& 0 \\
        0 & 0 & \hdots & 0 & 1
    \end{bmatrix}
\end{split}
\end{equation}
Where each $\mR_i$ is a $2\times 2$ rotation matrix at frequency $\omega_i$. Now:
\begin{equation}
    \mW = \mX\mR\mX^{\dagger, T}
\end{equation}
We can then calculate the recurrent weight loss:
\begin{equation}
    |\mW|_F^2 = \Tr[\mW\mW^T] = \Tr[\mX^T\mX\mR\mX^{\dagger,T}\mX^{\dagger}\mR^T]
\end{equation}
$\mX^T\mX$ is a symmetric $2F+1 \times 2F+1$ positive-definite matrix, and its inverse is $\mX^{\dagger,T}\mX^{\dagger}$, another symmetric positive-definite matrix. To see that these matrices are inverses of one another perform the singular value decomposition, $\mX = \mU\Sigma\mV^T$, and $\mX^{\dagger} = \mU\Sigma^{-1}\mV^T$. Then $\mX^T\mX = \mV\Sigma^2\mV^T$ and $\mX^{\dagger,T}\mX^{\dagger} = \mV\Sigma^{-2}\mV^T$, which are clearly inverses of one another.

Introducing a new variable, $\mY =\mX^T\mX$:
\begin{equation}
    |\mW|_F^2 = \Tr[\mY\mR\mY^{-1}\mR^T]
\end{equation}
We then use the following trace inequality, from Ruhe \citep{ruhe1970perturbation}. For two positive semi-definite symmetric matrices, $\mE$ and $\mF$, with ordered eigenvalues, $e_1 \geq \hdots \geq e_n \geq 0$ and $f_1 \geq \hdots \geq f_n \geq 0$
\begin{equation}
    \Tr[\mE\mF] \geq \sum_{i=1}^n e_i f_{n-i+1}
\end{equation}
Now, since $\mR\mY^{-1}\mR^T$ and $\mY^{-1}$ are similar matrices, they have the same eigenvalues, and $\mY^{-1}$ is the inverse of $\mY$ so its eigenvalues are the inverse of those of $\mY$. Therefore:
\begin{equation}
     |\mW|_F^2 = \Tr[\mY\mR\mY^{-1}\mR^T] \geq \sum_{i=1}^{2F+1}\frac{\lambda_i}{\lambda_i} = 2F+1
\end{equation}
Then we can show that this lower bound on the weight loss is achieved when the coefficient vectors are orthogonal, hence making the modular solution optimal. If all the coefficient vectors are orthogonal then $\mY$ and $\mY^{-1}$ are diagonal, so they commute with any matrix, and:
\begin{equation}
    |\mW|_F^2 = \Tr[\mY\mR\mY^{-1}\mR^T]  = \Tr[\mY\mY^{-1}\mR^T\mR] = \Tr[\mathbb{I}_{2F+1}] = 2F+1
\end{equation}

\subsection{Recurrent Loss with Bias} 

Now we return to the case of interest:
\begin{equation}
    \mW \vg(t) + \vb = \vg(t+1)
\end{equation}
Our energy loss, equation~\ref{eq:recurrent_energy_loss}, penalises the size of the weight matrix $|\mW|_F^2$ and not the bias. Therefore, if we can make $\mW$ smaller by assigning some of its job to $\vb$ then we should. We can do this by setting $\vb_0 = \vb$ (recall the definition of $\vb_0$ from equation~\ref{eq:recurrent_representation}) and constructing the following, smaller, min-norm weight matrix:
\begin{equation}
    \mW = \begin{bmatrix}
       \va_1 & \vb_1 & \hdots & \va_F & \vb_F
    \end{bmatrix}\begin{bmatrix}
        \mR_1 & 0 & \hdots &0 \\
        0 & \mR_2 & \hdots &0 \\
        \vdots & \vdots & \ddots & \vdots \\
        0 & 0 & \hdots & \mR_F
    \end{bmatrix}\begin{bmatrix}
       \mA^T_1 \\ \mB^T_1 \\ \vdots \\ \mA^T_F \\ \mB^T_F
    \end{bmatrix} = \hat{\mX}\hat{\mR}\hat{\mX}^\dagger
\end{equation}
Using the definitions in the previous section. This slightly complicates our previous analysis because now $\hat{\mX}^{\dagger,T}\hat{\mX} = \hat{Y}^{\dagger}$ is not the inverse of $\hat{\mX}^T\hat{\mX} = \hat{\mY}$. If $\vb$ is orthogonal to all the vectors $\{\va_i, \vb_i\}_{i=1}^F$, then it is the inverse, and the previous proof that modularity is an optima goes through.

Fortunately, it is easy to generalise to this setting. $\mX^\dagger$ is not quite the pseudoinverse of $\mX^\dagger$ because its vectors have to additionally be orthogonal to $\vb$. This means we can break down each of the vectors into two components, for example:
\begin{equation}
    \mA_1 = \hat{\mA}_1 + \hat{\mA_{1,\perp}}
\end{equation}
The first of these vectors is the transpose of the equivalent row of the pseudoinverse of $\hat{\mX}$, it lives in the span of the vectors $\{\va_i, \vb_i\}_{i=1}^F$. The second component is orthogonal to this span and ensures that $\mA_1^T\vb = 0$. Hence, the previous claim. If $\vb$ is orthogonal to the span of $\{\va_i, \vb_i\}_{i=1}^F$, this is the standard pseudoinverse and the previous result goes through.

We can express the entire $\hat{\mX}^\dagger$ matrix in these two components:
\begin{equation}
    \hat{\mX}^\dagger = \hat{\mX}^\dagger_0 + \hat{\mX}^\dagger_\perp
\end{equation}
Then:
\begin{equation}
    \hat{\mY}^\dagger = (\hat{\mX}^\dagger_0 + \hat{\mX}^\dagger_\perp)^T(\hat{\mX}^\dagger_0 + \hat{\mX}^\dagger_\perp) = \hat{\mX}^{\dagger,T}_0\hat{\mX}^\dagger_0 + \hat{\mX}^{\dagger,T}_\perp\hat{\mX}^\dagger_\perp = \hat{\mY}^{-1} + \hat{\mY}^\dagger_\perp
\end{equation}
Both of these new matrices are positive semi-definite matrices, since they are formed by taking the dot product of a set of vectors. Then:
\begin{equation}
     |\mW|_F^2 = \Tr[\hat{\mY}\hat{\mR}\hat{\mY}^{-1}\hat{\mR}^T] + [\hat{\mY}\hat{\mR}\hat{\mY}_{\perp}^{\dagger}\hat{\mR}^T]
\end{equation}
Now, the first term is greater than or equal than $2F$, as in the previous setting. And since $\hat{\mY}$ and $\hat{\mY}_{\perp}^{\dagger}$ are positive semi-definite, and so therefore is $\hat{\mR}\hat{\mY}_{\perp}^{\dagger}\hat{\mR}^T$, this second term is greater than or equal to 0. Hence, $|\mW|_F^2 \geq 2F$, and orthogonal encodings achieve this bound, therefore it is an optimal solution according to the weight loss.

\subsection{Structure of Representation - Act II - Large $\lambda_W$}

We saw that our representation takes the following form.
\begin{equation}\label{eq:recurrent_representation_app}
    \vg(t) = \sum_{i=1}^F \va_i \cos(\omega_i t) + \vb_i \sin(\omega_i t) + \vb_0
\end{equation}
Further, and we saw in the previous section that each added frequency increases the recurrent weight loss. To solve the task we only need two frequencies, why would we have more than two? The only possibility is that by including additional frequencies we might save activity energy, at the cost of weight energy. We will simplify our life for now by saying $\lambda_W$ is very large, so our representation only has two frequencies. We now turn to our activity loss and ask which frequencies should mix, and which modularise.

\subsection{Frequencies}\label{sec:frequency_activity_properties}

Consider a mixed frequency neuron:
\begin{equation}
    g_i(t) = (\va_1)_i\cos(\omega_1 t) + (\vb_1)_i \sin(\omega_1 t) + (\va_2)_i\cos(\omega_2 t) + (\vb_2)_i\sin(\omega_2 t) + b_n
\end{equation}
By rescaling and shifting time, we can rewrite all such mixed neurons in a simpler form:
\begin{equation}
    g_i(t) = \alpha_i \cos(t) + \beta_i \cos(\omega t + \phi) + b_i \quad \omega > 1
\end{equation}
Where $\omega$ is the ratio of the larger to the smaller frequencies.  We will show that the activity of this mixed neuron is lower than its corresponding modular counterpart when $\omega$ is an integer, and higher if $\omega$ is irrational. Finally, we show empirically, and some theoretical evidence, that the same is true of rational, non-integer $\omega$. Hence, we find the activity energy is minimised by modularising unless one of the frequencies is an integer multiple of the others. Further, since if the mixed neuron $g_i(t)$ is preferred over its modular counterpart, then so too are positively scaled versions, $\mu g_i(t)$, we will consider responses of the form:
\begin{equation}
    g(t) = \cos(t) + \delta \cos(\omega t + \phi) + \Delta \quad \omega > 1
\end{equation}
\begin{figure}[h]
    \centering
    \includegraphics[width=5in]{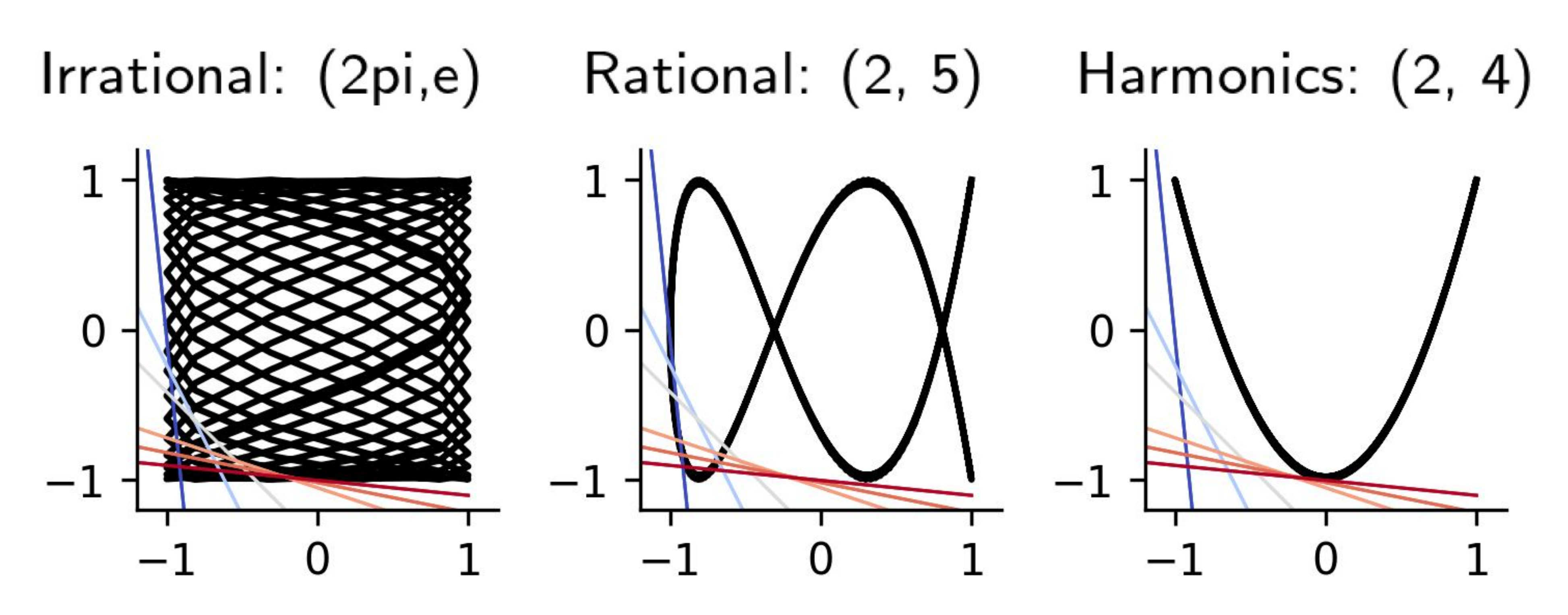}
    \caption{Schematics showing irrational and rational ratio case the data are range dependent beyond the modular-mixed boundary while in de-modularising harmonics case, the two periodic waves are range independent.}
    \label{fig:freq_corner}
\end{figure}

\paragraph{Irrational Frequencies Modularise} If $\omega$ is irrational then, by Kronecker's theorem, you can find a value of $t$ for which $\cos(t)$ and $ \cos(\omega t + \phi)$ take any pair of values. This makes the two frequencies, among other things, extreme point independent, and therefore no mixing is better than modularising.

\paragraph{Even Integer Multiples Mix} According to \cref{thm:linear_autoencoding}, to modularise, for all $\delta$ and $\phi$ values:
\begin{equation}
    \Delta > \sqrt{1 + \delta^2} = 1 + \mathcal{O}(\delta^2)
\end{equation}
We will therefore try to find a $\delta$ and $\phi$ which breaks this inequality. Consider $\delta <0$ and very small and $\phi = 0$, then, writing $\omega = 2m$ for integer $m$:
\begin{equation}
    \Delta = -\min_i[\cos(t) + \delta\cos(2mt)]
\end{equation}
$\cos(t) - |\delta|\cos(2mt)$ takes its extreme values when $t$ is an integer multiple of $\pi$, and the smallest it can be is $-1+|\delta|$. Hence:
\begin{equation}
    \Delta =  1 - \delta < 1 + \mathcal{O}(\delta^2)
\end{equation}
This is smaller than the critical value, so the representation should mix, since at least one mixing inequality was broken.

\paragraph{Odd Integer Multiples Mix} If $\omega$ is an odd integer then we can instead mix cosine with sine by choosing $\phi = \frac{\pi}{2}$:
\begin{equation}
    g_i(t) = \cos(t) + \delta \sin(\omega t) + \Delta
\end{equation}
Then the same argument goes through as above.

\paragraph{Other Rational Multiples Modularise} Now consider $\omega = \frac{p}{q}$ for two integers $p$ and $q \neq 1$. We were inspired in this section by the mathoverflow post of \citet{mathoverflowpost}. So we are considering the mixed encoding:
\begin{equation}
    g(t) = \cos(t) + \delta \cos(\frac{p}{q}t + \phi) + \Delta
\label{rational_multiple_mixed_encoding}
\end{equation}
And we have to show that for all $\delta$ and $\phi$, $\Delta^2 > 1+\delta^2$. For general $\phi$ we have been unable to do this, but for $\phi =0$ we present proof below in the hope this can be generalised.

We break down the problem into four cases based on the sign of $\delta$, and whether $p$ and $q$ are both odd, or only one is. The easiest is if both are odd, and $\delta>0$. Then take $t = q\pi$:
\begin{equation}
    g(t) = \cos(q\pi) + \delta \cos(p\pi) + \Delta = -1 -\delta + \Delta > 0
\end{equation}
Therefore $\Delta$ is at least $1 + \delta$, which is larger than $\sqrt{1+\delta^2}$, hence modularising is better.

Conversely, if one of $p$ or $q$ is even (let's say $p$ w.l.o.g.) and $\delta < 0$ then choose $t = p\pi$ again:
\begin{equation}
    g(t) = -1 + \delta \cos(p\pi) + \Delta = -1 -|\delta| + \Delta > 0
\end{equation}
And the same argument holds.

Now consider the case where $ \delta> 0$ and one of $p$ or $q$ is odd. Then there exists an odd integer $k$ such that \cite{mathoverflowpost}:
\begin{equation}
    kp = q + 1 \mod 2q
\label{mod_relation1}
\end{equation}
Therefore take $t = k\pi$:
\begin{equation}
    g(t) = \cos(k\pi) + \delta \cos(\pi(\frac{kp}{q})) + \Delta = -1 + \delta \cos(\pi(\frac{q+1+2nq}{q})) + \Delta
\end{equation}
For some integer $n$. Developing:
\begin{equation}
    g(t) = -1 - \delta \cos(\frac{\pi}{q}) + \Delta
\end{equation}
The key question is whether for any $\Delta$ below the critical value this representation is nonnegative. For this to be true:
\begin{equation}
    -1 - \delta \cos(\frac{\pi}{q}) + \sqrt{1+\delta^2} \geq 0
\end{equation}
Developing this we get a condition on the mixing coefficient $\delta$:
\begin{equation}
    \delta \geq \frac{2\cos(\frac{\pi}{q})}{1-\cos^2(\frac{\pi}{q})}
\label{rational_multiple_eq_r}
.\end{equation}
We can apply a similar argument with $t=\frac{qk\pi}{p}$ to get:
\begin{equation}
    -\cos(\frac{\pi}{p}) - \delta + \Delta \geq 0 
\end{equation}
This leads us to:
\begin{equation}
    \delta \leq \frac{1 - \cos^2(\frac{\pi}{p})}{2\cos(\frac{\pi}{p})}
\label{rational_multiple_eq_l}
\end{equation}

We squeeze inequalities eqn.\ref{rational_multiple_eq_r},\ref{rational_multiple_eq_l} and get
\begin{equation}
    1 \leq \frac{(1-\cos^2(\pi/p))(1-\cos^2(\pi/q))}{4\cos(\pi/p)\cos(\pi/q)}
\end{equation}
For any even, odd integer pair of $p, q \neq 1$ and $\delta >0$, $\cos(\pi/p)$ and $\cos(\pi/q)$ are bounded in $(0,1]$. In this case, the maximum possible value of the righthand side is achieved with the smallest possible pair of integers $(3,4)$, for which the inequality does not hold. Thus the inequalities eqn.\ref{rational_multiple_eq_l}, \ref{rational_multiple_eq_r} cannot hold at the same time which let us to conclude that there is no $\Delta$ below the critical value satisfying nonnegativity constraint and leads to modular representation.  

The last case is where $\delta<0 $ and both $p, q$ are odd. We can extend the relation shown in eqn.\ref{mod_relation1} as following by simple substitution of oddity of the variables. With given odd integers $p, q$, there exists an \textit{even} integer $k$ satisfying eqn.\ref{mod_relation1}.
Similar to above, take $t=\frac{qk\pi}{p}$ for the mixed encoding eqn.\ref{rational_multiple_mixed_encoding} and we develop:
\begin{equation}
    g_{i} = -\cos(\frac{\pi}{p})+ \delta + \Delta
\end{equation}
which leads to the following inequality:
\begin{equation}
    \delta \geq \frac{1-\cos^2(\pi/p)}{-2\cos(\pi/p)}
\end{equation}
Now take $t = k\pi$, and we get:
\begin{equation}
    g_{i} = 1-\delta\cos(\pi/q)+\Delta
\end{equation}
which leads to the in equality, 
\begin{equation}
    \delta \leq \frac{-2\cos(\pi/q)}{1-\cos^2(\pi/q)}
\end{equation}
Again, the above two inequalities cannot hold together with the odd integers $p, q > 1$. 

We also empirically show on Fig~\ref{fig:freq_corner} that in rational multiple frequencies, there is no modular-mixed boundary possible to break the modularisation. 

\newpage

\section{Metrics for Representational Modularity and Source Statistical Interdependence}\label{app:metrics_section}

We seek to design a metric to measure how much information a latent contains about a source. However, we would like to ignore information that could be explained through the latent's tuning to a different source. For example, perhaps two variables are correlated, if a latent encodes one of them, then it will also be slightly informative about the other despite not being functionally related. We therefore seek to condition on all of the sources bar one, and measure how much information the latent contains about the remaining source.

\citet{dunion2023conditional} leverage conditional mutual information in a similar way to this in a reinforcement learning context, but for training rather than evaluation, and therefore resort to a naive Monte Carlo estimation scheme that scales poorly. Instead, we leverage the identity
\begin{equation}
    I(\vz_j ; \vs_i | \vs_{-i}) = I(\vz_j ; \vs) - I(\vz_j ; \vs_{-i}),
\end{equation}
where $\vs_{-i}$ is a shorthand for $\{\vs_{i'} \; | \; i' \neq i\}$. Since this involves computing mutual information with multiple sources, we restrict ourselves to considering discrete sources and use a continuous-discrete KSG scheme~\citep{ross2014mutual} to estimate information with continuous neural activities. We normalise conditional mutual information by $H(\vs_i | \vs_{-i})$ to obtain a measure in $[0, 1]$. We then arrange the pairwise quantities into a matrix $\mC \in \R^{d_s \times d_z}$ and compute the normalised average ``max-over-sum'' in a column as a measure of sparsity, following \citet{hsu2023disentanglement}:
\begin{equation}\label{eq:cinfom}
    \mathrm{CInfoM}(\vs, \vz) := \left(
        \frac{1}{d_z} \sum_{j=1}^{d_z} \frac{\max_{i} \mC_{ij}}{\sum_{i=1}^{d_s} \mC_{ij}} - \frac{1}{d_s}
    \right)\bigg/\left(
        1 - \frac{1}{d_s}
    \right).
\end{equation}
CInfoM is appropriate for detecting arbitrary functional relationships between a source and a neuron's activity. However, in some of our experiments, the sources are provided as supervision for a linear readout of the representation. In such cases, the network cannot use information that is nonlinearly encoded, so the appropriate meausure is the degree of linearly encoded information. Operationalising this we leverage the predictive $\mathcal{V}$-information framework of \citet{xu2020theory}. We specify the function class $\mathcal{V}$ as linear and calculate
\begin{equation}
\begin{aligned}
    I_\mathcal{V}(\vz_j \to \vs_i | \vs_{-i}) &= I_\mathcal{V}(\vz_j \to \vs) - I_\mathcal{V}(\vz_j \to \vs_{-i}) \\
    &= H_\mathcal{V}(\vs) - H_\mathcal{V}(\vs | \vz_j) - H_\mathcal{V}(\vs_{-i}) + H_\mathcal{V}(\vs_{-i} | \vz_j),
\end{aligned}
\end{equation}
followed by a normalisation by $H_\mathcal{V}(\vs_i | \vs_{-i})$. Each predictive conditional $\mathcal{V}$-entropy term is estimated via a standard maximum log-likelihood optimization over $\mathcal{V}$, which for us amounts to either logistic regression or linear regression, depending on the treatment of the source variables as discrete or continuous. The pairwise linear predictive conditional information quantities are reduced to a single linear conditional InfoM quantity by direct analogy to \cref{eq:cinfom}.

Finally, to facilitate comparisons across different source distributions, we report CInfoM against the normalised multiinformation of the sources:
\begin{equation}
    \mathrm{NI}(\vs) = \frac{\sum_{i=1}^{d_s} H(s_i) - H(\vs)}{\sum_{i=1}^{d_s} H(s_i) - \max_{i} H(s_i)}.
\end{equation}
This allows us to test the following null hypothesis: that breaking statistical independence, rather than range independence, is more predictive of mixing. On the other hand, if range independence is more important, then source distributions that retain range independence while admitting nonzero multiinformation will better induce modularity compared to those that break range independence.

\newpage

\section{What-Where Task}\label{app:what_where}
\subsection{Experimental Setup}
\subsubsection{Data Generation}
The network modularises when exposed to both simple and complex shapes. Simple shapes, used in all main paper figures, are 9-element one-hots, reflecting the active pixel's position in a $3 \times 3$ grid. Formally, for a shape at position $(i, j)$ in the grid, the corresponding vector $s \in \mathbb{R}^9$ is given by:
\[
s_k = \begin{cases} 
1 & \text{if } k = 3(i-1) + j \\
0 & \text{otherwise}
\end{cases}
\]
where $i, j \in \{1, 2, 3\}$.

For complex shapes, each shape is a binary vector $c \in \mathbb{R}^9$, with exactly 5 elements set to 1 and 4 elements set to 0, representing the active and inactive pixels, respectively. Each complex shape takes the (approximate) shape of a letter (\ref{fig:what_where_shapes}) and can be shifted to any of the 9 positions in the $3 \times 3$ grid. The standard training and testing datasets include one of every shape-position pair. The correlations introduced later include duplicates of a subset of these data points. 
\begin{figure}[h]
    \centering
    \includegraphics[width=\textwidth]{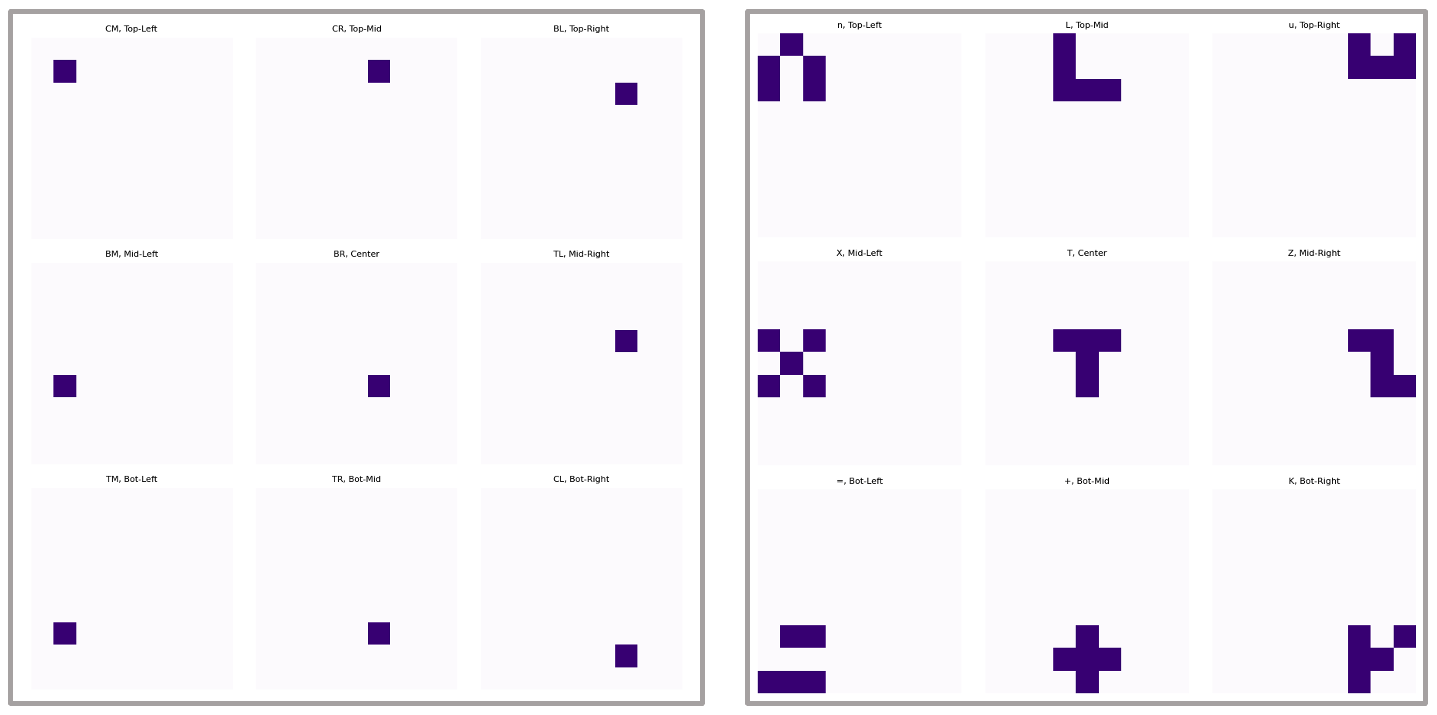}
    \caption{Different possible inputs to the network. Both one-hot (left) and letter-based (right) shapes can be outputted as either a concatenation of one-hots or as a 2D variable.}
    \label{fig:what_where_shapes}
\end{figure}

\subsubsection{Network Architecture}
The network we use here is designed in PyTorch and takes as input am 81-element vector, flattened to a $9 \times 9$ image (\ref{fig:what_where_shapes}).The network architecture is formally defined as follows:
\[
\begin{aligned}
    &\text{Input:} \quad x \in \mathbb{R}^{81} \\
    &\text{Hidden layer:} \quad \mathbf{h} = \phi(\mathbf{W}_1 x + \mathbf{b}_1), \quad \mathbf{W}_1 \in \mathbb{R}^{25 \times 81}, \quad \mathbf{b}_1 \in \mathbb{R}^{25} \\
    &\text{Output layer:} \quad \mathbf{y} = \mathbf{W}_2 h + \mathbf{b}_2, \quad \mathbf{W}_2 \in \mathbb{R}^{2 \times 25}, \quad \mathbf{b}_2 \in \mathbb{R}^{2} \\
\end{aligned}
\]
where $\phi$ is the activation function, either ReLU or $\tanh$, depending on the experiment. Weights $\mathbf{W}_1$ and $\mathbf{W}_2$ are initialised with a normal distribution $\mathcal{N}(0, 0.01)$, and biases $\mathbf{b}_1$ and $\mathbf{b}_2$ are initialised to zero.

\subsubsection{Training Protocols}
The network uses the Adam optimiser (\cite{Kingma2014adam}), with learning rate and other hyperparameters varying from experiment to experiment. The mean squared error (MSE) is calculated separately for `what' and `where' tasks, and then combined along with the regularisation terms. The total loss function $L_{total}$ is a combination of the task-specific losses and regularisation terms:
\[
L_{total} = L_{what} + L_{where} + \lambda_R(||\mathbf{W}_1||^2_2 + ||\mathbf{W}_2||^2_2 + ||\mathbf{h}||^2_2)
\]

where $L_{task}$ for `what' and `where' tasks is defined as:
\[
L_{task}(y, \hat{y}) = \frac{1}{N} \sum^N_{i=1} (y_i - \hat{y_i})^2
\]
Where $\lambda_R$ is the regularisation hyperparameter, typically set to $0.01$ unless specified otherwise. The network is trained using the Adam optimiser with a learning rate ranging between $0.001$ and $0.01$, adjusted as needed. Experiments are run for order $10^4$ epochs on 5 random seeds. Each experiment was executed using a single
consumer PC with 8GB of RAM and across all settings, the networks achieve negligible loss.

\subsection{Modularity of What \& Where}
\subsubsection{Biological Constraints are necessary for Modularity}
To understand the effect of biological constraints on modularisation, consider the optimisation problem under positivity and energy efficiency constraints. The activation function $\phi(x)$ ensures non-negativity:
\[
\phi(x) = \max(0, x)
\]
The energy efficiency is enforced by adding an L2 regularisation term to the loss function. By minimising the combined loss function $L_{total}$, the network encourages sparse and low-energy activations, leading to a separation of neurons responding to different tasks (`what' and `where').

To illustrate the necessity of biological constraints in the modularisation of these networks, we show below the weights and activity responses of networks for networks where these constraints aren't present. As discussed above, positivity is introduced via the ReLU activation function, and energy efficiency is defined as an L2 regularisation on the hidden weights and activities. 
\begin{figure}[!h]
    \centering
    \includegraphics[width=\textwidth]{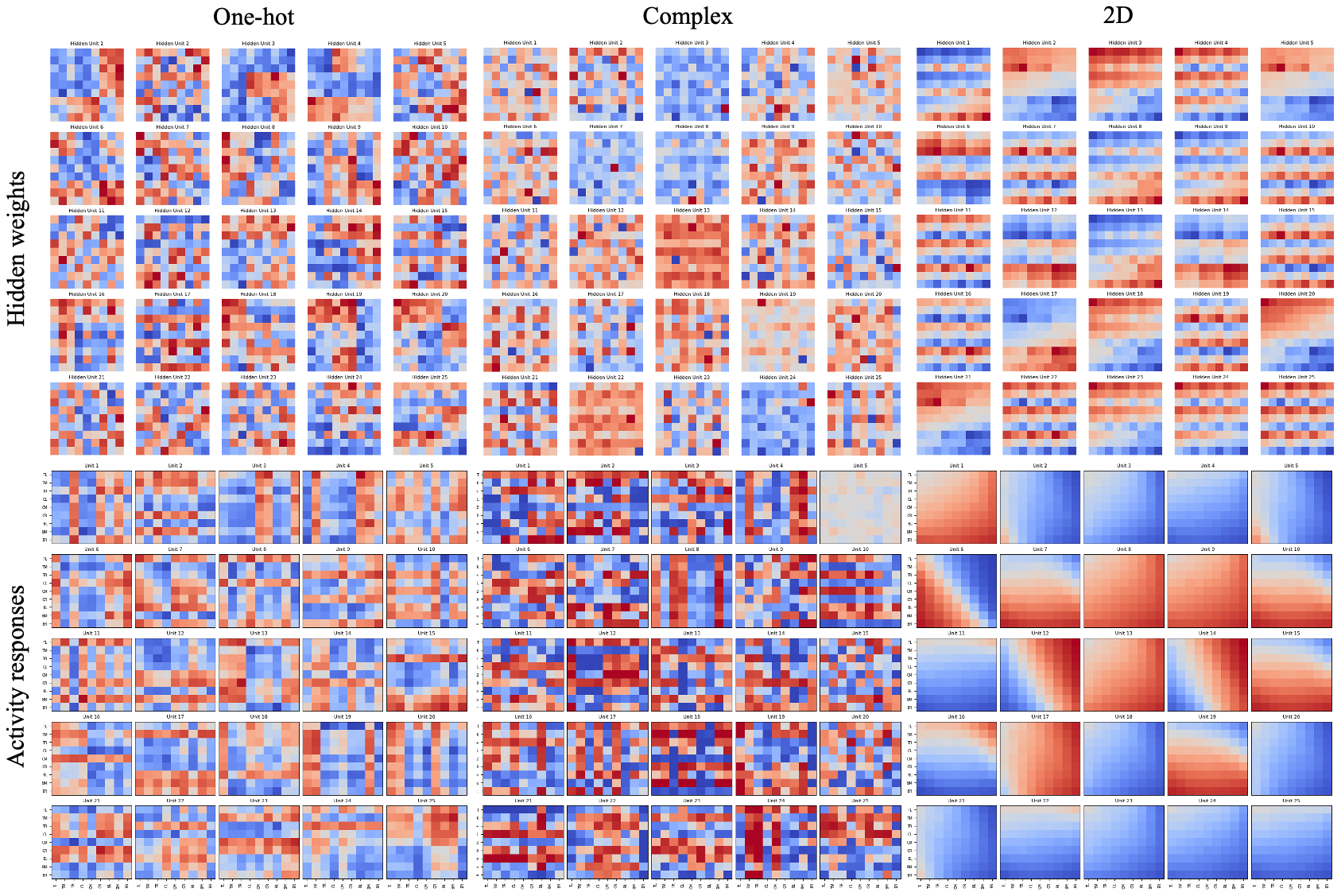}
    \caption{Hidden weights and activity responses of FF networks without biologically-inspired constraints. The one-hot and letter-like binary cases are shown, as well the 2D output setting with one-hot inputs.}
    \label{fig:what_where_no_constrains}
\end{figure}

As shown in the neural tuning curves, the unconstrained networks do encode task features, but they do so such that each neuron responds to the specific values of both input features, they are mixed selective. Compare these results to the weights and activity responses when positivity and energy efficiency constraints are introduced. In this setting, there is a clear separation of `what' and `where' tasks, into two distinct sub-populations of neurons. 
\begin{figure}[!h]
    \centering
    \includegraphics[width=\textwidth]{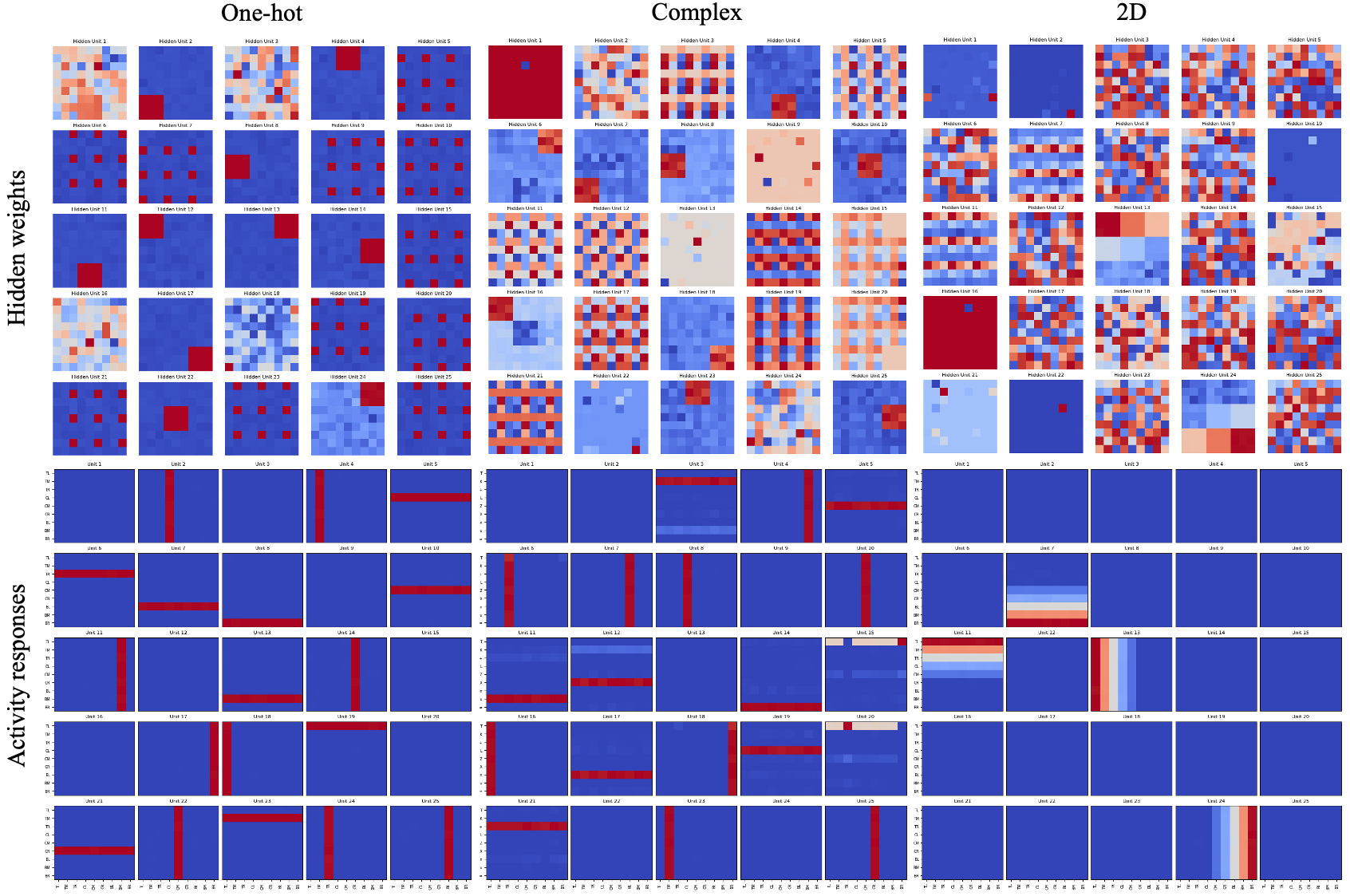}
    \caption{Weights and activity response of networks under biological constraints. The one-hot and letter-like binary cases are shown, as well the 2D output setting with one-hot inputs.}
    \label{fig:what_where_with_constraints}
\end{figure}

\subsection{Dropout \& Correlation}
We use two different approaches to dropout in order to illustrate the importance of range independence. 

\textbf{Diagonal Dropout}:
The first approach removes example data points (i.e., shape-position pairs) from elements along the diagonals, starting in the middle and extending out to all four corners. This has the effect of increasing the mutual information between data sources but does not significantly affect their range dependence. Formally, let $D$ be the set of all shape-position pairs. In the diagonal dropout setting, we remove pairs $(s_i, p_i)$ where $i$ lies along the diagonal of the input grid:

\[
D' = D \setminus \{(s_i, p_i) \mid i \in \text{diagonal positions}\}
\]

In the most extreme case, only one data point from each corner is removed, and this is not sufficient to force mixed-selectivity in the neurons.

\textbf{Corner Dropout}:
The second approach removes data points from one corner of the distribution. This also increases the mutual information between sources but changes their extreme-point dependence. Specifically, we remove pairs $(s_i, p_i)$ where $i$ lies in the bottom-left corner of the input grid:

\[
D'' = D \setminus \{(s_i, p_i) \mid i \in \text{bottom-left corner positions}\}
\]

In this setting, a single data point removed from the corner is insufficient for breaking modularity; however, removing more than this causes mixed selective neurons to appear.

\textbf{Correlation}:
To correlate sources, we duplicate data points that appear along the diagonal. This increases the mutual information between sources without affecting their range independence:

\[
D''' = D \cup \{(s_i, p_i) \mid i \in \text{diagonal positions}\}
\]

The mutual information $I(X;Y)$ between the shape $X$ and position $Y$ is calculated as follows:

\[
I(X;Y) = \sum_{x,y} P(x,y) \log \frac{P(x,y)}{P(x)P(y)}
\]

where $P(x,y)$ is the joint probability distribution of $X$ and $Y$.
\begin{figure}[!h]
    \centering
    \includegraphics[width=\textwidth]{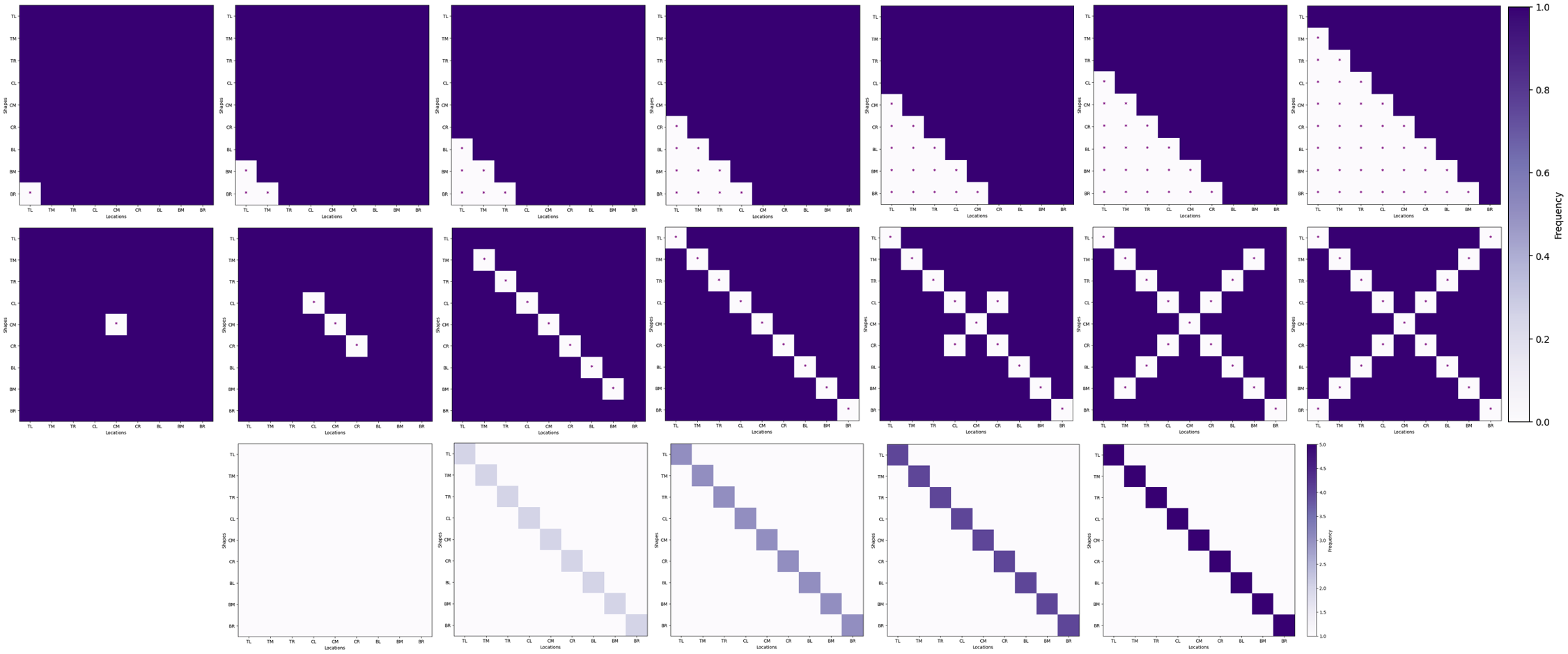}
    \caption{Increasing dropout (left to right) for both the corner-cutting (top) and diagonal (middle) cases, as well as correlation distributions (bottom). Note: asterisk denotes the absence of training data for this pair of input features.}
    \label{fig:data_distributions}
\end{figure}

\section{Nonlinear Autoencoders}\label{app:nonlinear_autoencoders}

\subsection{Autoencoding Sources}
Three-dimensional source data is sampled from $[0, 1]^3$ and discretised to 21 values per dimension. The encoder and decoder are each a two-layer MLP with hidden size 16 and ReLU activation. The latent bottleneck has dimensionality six. All models use $\lambda_\text{reconstruct}=1$, $\lambda_\text{activity energy}=0.01$, $\lambda_\text{activity nonnegativity}=1$, and $\lambda_\text{weight energy}=0.0001$. Models are initialised from a He initialisation scaled by 0.3 and optimized with Adam using learning rate 0.001. Each experiment was executed using a single consumer GPU on a HPC with 2 CPUs and 4GB of RAM. 

\subsection{Autoencoding Images}
We subsample 6 out of the 9 sources in the Isaac3D dataset, fixing a single value for the other 3. This yields a dataset of size $12,288$.
We use an expressive convolutional encoder and decoder taken from the generative modeling literature. We use 12 latents, each quantised to take on 10 possible values. We use a weight decay of 0.1 and a learning rate of 0.0002 with the AdamW optimizer. Each experiment was executed using a single consumer GPU on a HPC with 8 CPUs and 8GB of RAM. 

\section{Recurrent Neural Networks}\label{app:RNNs}
\subsection{Linear and Nonlinear Periodic Wave RNN Experiment Details}

For the linear RNN setting we provided a periodic pulse input (2D delta function of frequencies $w_1, w_2$ as input $x$, i.e. $x_k(t) = \delta(\cos(w_kt)-1), k \in \{1, 2\}$) and trained the network to generate a two cosines of the same frequencies, $y_k(t) = \cos(w_kt)$. 

For the nonlinear RNN we designed a two frequency mixing task. Given two source frequencies, the network must generate two cosine waves whose frequencies are the sum and difference of two input frequencies. The network requires non-linearities in order to approximate the multiplication: $\cos(a+b) = \cos(a)\cos(b)-\sin(a)\sin(b), \cos(a-b)=\cos(a)\cos(b)+\sin(a)\sin(b)$. In the task we provide the RNN with 2D periodic delta pulse of frequency $\frac{w1+w2}{2}$ and $\frac{w1-w2}{2}$ and learns to generate a trajectory of $\cos(w_1)$ and $\cos(w_2)$.

We use the following recurrent neural network, 
\begin{gather}
    \mathbf{g}(t+1)= f(\mathbf{W_{\text{rec}}}\mathbf{g}(t) + \mathbf{W_{\text{in}}}+\mathbf{b_{\text{rec}}}) \\
    \mathbf{y}(t) = \mathbf{R}\mathbf{g}(t) + \mathbf{b_{\text{out}}}
\end{gather}
in our linear RNNs $f(\cdot)$ is identity whereas in the nonlinear ones we used $\text{ReLU}$ activation to enforce positivity condition. 

For irrational output frequency ratio case, we used 
\begin{equation}
w_1 = p\pi, w_2 = \sqrt{q}, p\sim\mathcal{U}(0.5, 4), q~\sim \mathcal{U}(1,10).
\end{equation}
For rational case, we sampled 
\begin{equation}
w_1,w_2 \in [1, 20] \cap \mathbb{Z}.
\end{equation}
Where $\mathbb{Z}$ is the set of integers. For harmonics: 
\begin{equation}
w_1 \in [1, 10] \cap \mathbb{Z}, w_2 = 2w_1.   
\end{equation}

We trained the RNN with the trajectory of length $T=200$, bin size $0.1$, and hidden dimension 16. For the linear RNN, we used learning rate 1e-3, 30k training iterations and $\lambda_{\text{target}}=1$, $\lambda_{\text{activity}}=0.5, \lambda_{\text{positivity}}=5, $ and $\lambda_{\text{weight}}=0.02$. For the nonlinear RNN, we used learning rate 7.5e-4, 40k training iterations and $\lambda_{\text{target}}=5, \lambda_{\text{activity}}=0.5$ and $\lambda_{\text{weight}}=0.01$. In both case, we initialised the weights to be orthogonal and the biases at zero, and used the Adam optimiser.

To assess the modularity of the trained RNNs, we performed a Fast Fourier Transform(FFT) on each neuron's activity and measured the relative power of the key frequencies $w_1, w_2$ with respect to the sum of total power spectrum
\begin{equation}
    C_{\text{neuron}_i, w_j} =  \frac{|FFT(g_i;w_j)|}{\sum^{}_f{|FFT(g_i;f)|}}
\end{equation}
and used it as a proxy of mutual information for modularity metric introduced in eqn.~\ref{eq:cinfom}.

\subsection{Nonlinear Teacher-Student RNNs}\label{sec:james_RNN}

\textbf{Network details.} The Teacher network has input, hidden, and output dimensions of 2, and has orthogonal recurrent weights, input weights, and output weights. The Student RNN has input dimension 2, output dimension 2, and hidden dimension 64. It is initialised as per PyTorch default settings. The Teacher network dynamics is a vanilla RNN: $\vh_t = tanh (\mW_{rec} \vh_{t-1} + \mW_{in} \vi_t)$, and each teacher predicts a target via $\vo_t = \mW_{out} \vh_t$. The Student RNN has identical dynamics (but with a ReLU activation function, and different weight matrices etc). 

\textbf{Generating training data.} The teacher RNN generates training data for the Student RNN. We want to tightly control the Teacher RNN hidden distribution (for corner cutting or correlation analyses), i.e., tightly control the source distribution for the training data. To control the distribution of hidden activities of the Teacher RNN, we use the following procedure. 1) We sample a randomly initialised Teacher RNN. 2) $\vh_{0}$ is initialised as a vector of zeros. 3) With a batch size of $N$, we take a \textbf{single} step of the Teacher RNN (starting from 0 hidden state) assuming $\vi_t = 0$. This produces network activations (for each batch), $\vp_t = tanh (\mW_{rec} \vh_{t-1}$. 4) We then sample from idealised distribution of the teacher hidden states, $\vh_t$. For example a uniform distribution, or a cornet cut distribution. At this point these are just i.i.d. random variables, and not recurrently connected. 5) To recurrently connect these points, we optimise the input to the RNN, $\vi_t$, such that the RNN prediction, $\vp_t$, becomes, $\vh_t$. We then repeat steps 3-5) for all subsequent time-steps, i.e., we find what the appropriate inputs are to produce hidden states as if they were sampled from an idealised distribution. To prevent the Teacher RNN from being input driven, on step 4), we solve a linear sum assignment problem across all batches ($N$ batches), so the RNN (on average) gets connected to a sample, $\vh_t$, that is close to its initial prediction, $\vp_t$. This means the input, $\vi_t$, will be as small as possible and thus the Teacher RNN dynamics are as unconstrained as possible.

\textbf{Training.} We train the Student RNN on $10000$ sequences generated by the Teacher RNN. The learning objective is a prediction loss $|\vo^{teacher}_t - \vo^{student}_t|^2$ plus regularisation of the squared activity of each neuron as well as each synapse (both regularisation values of 0.1). We train for 60000 gradient updates, with a batch size of 128. We use the Adam optimiser with learning rate 0.002.

\subsection{RNN Models of Entorhinal Cortex}\label{app:Neuro_RNNs}

We now talk through the linear RNNs used in~\cref{sec:entorhinal}. Our model is inspired by the entorhinal literature and linear network models of it \cite{dorrell2023actionable,whittington2020tolman,whittington2025tale,whittington2022disentanglement}. In each trial the agent navigates a 3x3 periodic environment. The RNN receives one special input only at timepoint 0 that tells it the layout of the room (format described later). Otherwise at each timestep it is told which action (north, south, east, west), the agent took. It uses this action to linearly update its hidden state via an action-dependent affine transform:
\begin{equation}
    \vg_t = \mW_{a_t}\vg_{t-1} + \vb_{a_t} 
\end{equation}
At each timestep it has to output, via an affine readout, a particular target depending on the task. 

In the spatial-reward task,~\cref{fig:Ent}, at timepoint 0 the agent starts at a consistent position and receives an input that is a 9-dimensional 1-hot code telling the agent the relative position of the reward. At each timestep the agent then has to output two 9-dimensional 1-hot codes, one signalling its position in the room, the other the relative displacement of the reward from its position.

The agent experiences many rooms. Across rooms the rewards are either randomly sampled (random rooms), or sampled from one of two fixed positions (fixed rooms). Each agent experiences a different mixture of randomly sampled or fixed rooms. We then optimise the network to perform the task with non-negative neural activities while penalising the L2 norm of the activity and all weight matrices.

In the mixed-selectivity task,~\cref{fig:Mixed_Selectivity}, there are three objects in each room and the agent has to report the relative displacement of each of them. The input at the start of each trial is a 27-dimensional code signalling where the three objects are relative to its own position. As such, as the agent moves around the room it has to keep track of the three objects. This task was harder to train so we moved from the mean squared error to the cross-entropy loss and found it worked well, apart from that all details of loss and training are the same.

Between trials we randomise the position of the objects. Some portion (0.8) of the time these positions are drawn randomly (including objects landing in the same position). The rest of the time the objects were positioned so that the first object was one step north-east of the second, which itself was one step north-east of the third. This introduced correlations between the positions of the objects, while preserving their range independence - all objects could occur in all combinations.

We measured the linear NCMI between the neural activity and the 27-dimensional output code and found that each neuron was informative about a single source, as expected, it had modularised \cref{fig:Mixed_Selectivity}C. However, pretend we did not know the third object existed. Instead we would calculate the linear NCMI between each neuron and those objects we know to exist. We would still find modular codes for these objects, but we would also find that the neurons that in reality code for third object, due to the correlations between object placements, are actually informative about the first and second object, so they look mixed selective! The position of the objects is always informative about each other, but by conditioning on each object we are able to remove this effect with our metric and uncover the latent modularity. But without knowing which latents to condition on we cannot proceed, and instead get lost in correlations.

\end{document}